\documentclass[journal]{IEEEtran}

\usepackage{amssymb}
\usepackage{amsmath}
\usepackage{amsfonts}
\usepackage{multirow}
\usepackage{graphicx}
\usepackage{textcomp}
\usepackage{amsthm}
\renewcommand{\vec}[1]{\boldsymbol{#1}}
\usepackage{setspace}
\usepackage{subfigure}
\usepackage{algorithm}
\usepackage{algorithmic}
\usepackage{color}
\usepackage{hyperref}

\allowdisplaybreaks[4]

\ifCLASSINFOpdf
\else
\fi

\newtheorem{thm}{Theorem}
\newtheorem{lem}{Lemma}

\hyphenation{op-tical net-works semi-conduc-tor}

\begin{document}

\title{An Online Resource Scheduling for Maximizing Quality-of-Experience in Meta Computing}

\author{Yandi Li,
    Jianxiong Guo,~\IEEEmembership{Member,~IEEE},
    Yupeng Li,~\IEEEmembership{Member,~IEEE},
    Tian Wang,
    and Weijia Jia,~\IEEEmembership{Fellow,~IEEE}
	\thanks{Yandi Li is with the Guangdong Key Lab of AI and Multi-Modal Data Processing, Department of Computer Science, BNU-HKBU United International College, Zhuhai 519087, China. (E-mail: liyandi@uic.edu.cn)
	
	Jianxiong Guo, Tian Wang, and Weijia Jia are with the Advanced Institute of Natural Sciences, Beijing Normal University, Zhuhai 519087, China, and also with the Guangdong Key Lab of AI and Multi-Modal Data Processing, BNU-HKBU United International College, Zhuhai 519087, China. (E-mail: jianxiongguo@bnu.edu.cn; cs\_tianwang@163.com; jiawj@bnu.edu.cn)

        Yupeng Li is with the Department of Interactive Media, Hong Kong Baptist University, Hong Kong. (E-mail: ivanypli@gmail.com)
	
	\textit{(Corresponding author: Jianxiong Guo.)}
	}
	\thanks{Manuscript received April xxxx; revised August xxxx.}}

\markboth{Journal of \LaTeX\ Class Files,~Vol.~xx, No.~xx, April~2023}%
{Shell \MakeLowercase{\textit{et al.}}: Bare Demo of IEEEtran.cls for IEEE Journals}

\maketitle

\begin{abstract}
Meta Computing is a new computing paradigm, which aims to solve the problem of computing islands in current edge computing paradigms and integrate all the resources on a network by incorporating cloud, edge, and particularly terminal-end devices. It throws light on solving the problem of lacking computing power. However, at this stage, due to technical limitations, it is impossible to integrate the resources of the whole network. Thus, we create a new meta computing architecture composed of multiple meta computers, each of which integrates the resources in a small-scale network. To make meta computing widely applied in society, the service quality and user experience of meta computing cannot be ignored. Consider a meta computing system providing services for users by scheduling meta computers, how to choose from multiple meta computers to achieve maximum Quality-of-Experience (QoE) with limited budgets especially when the true expected QoE of each meta computer is not known as a priori? The existing studies, however, usually ignore the costs and budgets and barely consider the ubiquitous law of diminishing marginal utility. In this paper, we formulate a resource scheduling problem from the perspective of the multi-armed bandit (MAB). To determine a scheduling strategy that can maximize the total QoE utility under a limited budget, we propose an upper confidence bound (UCB) based algorithm and model the utility of service by using a concave function of total QoE to characterize the marginal utility in the real world. We theoretically upper bound the regret of our proposed algorithm with sublinear growth to the budget. Finally, extensive experiments are conducted, and the results indicate the correctness and effectiveness of our algorithm.
\end{abstract}

\begin{IEEEkeywords}
	Meta Computing, Resource Scheduling, Multi-Armed Bandits, Quality-of-Experience, Diminishing Marginal Utility, Online Algorithm.
\end{IEEEkeywords}

\IEEEpeerreviewmaketitle

\section{Introduction}

    The ever-increasing need for computing powers has been inducing the evolution of computing paradigms, from the Client-Server model to Cloud Computing \cite{josep2010view} \cite{armbrust2010view}, Internet of Things (IoT) \cite{atzori2010internet} \cite{weber2010internet}, and currently Edge Computing \cite{shi2016edge} \cite{premsankar2018edge}. Nowadays, a new paradigm named ``Meta Computing'' \cite{cheng2023meta} has been proposed for more efficient utilization of all accessible computing resources on the network, which can integrate the resources of cloud, edge, and particularly terminal-end IoT devices as a meta computer. By employing blockchain technology, meta computing can guarantee a trusted computing environment for users. The ultimate goal of meta computing is to dynamically transform the Internet and its interconnected resources into a vast pool of computing power in order to empower the latest data-intensive applications that demand strong real-time performance and robust security features. This makes the user seem to be using a supercomputer.

    Unfortunately, in the beginning, due to technical limitations, it is challenging to integrate all resources over the entire Internet into one meta computer (MC). However, we think that it is achievable to integrate all resources on a small-scale nearby network, e.g., resources from a college or a big company. Thus, a feasible early prototype for meta computing is a meta computing system (MCS) consisting of multiple MCs, each of which integrates all resources from a small area nearby through a fast Local Area Network. As a whole, the MCS also integrates resources in a large area and can provide better computing services for users. In this case, an MCS can be compared to a high-performance CPU, and multiple MCs in it are equivalent to multiple CPU cores.

    In our MCS, since it comprises multiple local MCs, a system manager in the MCS identifies tasks and allocates an appropriate MC for these tasks. 
    However, each MC is equipped with different hardware devices and network conditions. This leads to MCs normally having different Quality-of-Service (QoS) and resulting in diverse experiences of service to users \cite{fizza2021qoe}. As the scarcity of computing resources gets alleviated by developments in meta computing, online computing services focus more on user experience which is usually characterized by Quality-of-Experience (QoE) beside computing power \cite{mahmud2019quality} \cite{saadi2020iot} \cite{sodhro2019quality} \cite{abusafia2022quality} \cite{laghari2012toward}. The QoE refers to the level of satisfaction or frustration that the user experiences when using a particular service, which depends on the service type, QoS, and user’s context \cite{aazam2019fog} \cite{chen2014qos} \cite{uthansakul2019estimating} \cite{aazam2017cloud}, accordingly relying on the allocated MC for meta computing. Also, the operation of an MCS is heavily concerned with the planned budget. Thus, it is desirable for an MCS to designate an MC presenting higher QoE per expense with budget concerns \cite{tran2013qoe} \cite{buyya2019fog}. However, in practice, the expected QoE related to an MC is an unknown priori for the MCS, which forces us to seek an online algorithm to deal with the randomness of services provided by the MCS. In addition, in the real world, as the total QoE of the service increases, the additional value obtained from each incremental improvement in QoE will decrease, resulting in the diminishing marginal utility for the total QoE \cite{bao2017prediction}, which yet is neglected in the existing researches about QoE.

    In this paper, we consider a scenario, in which an MCS with a budget constraint and a limited user base provides related computing services to its users for a certain kind of frequently operated tasks, such as industrial Internet of Things analysis or intelligent video surveillance.
    The MCS prefers to choose the MC with better QoE and lower resource costs. Nevertheless, for the users served by the same MC, the received QoE is assumed to follow the same but unknown distribution due to the fluctuation in hardware resources and users’ context. Therefore, it is crucial for the MCS to effectively explore real QoE-cost ratios of those MCs and find out the optimal one. Although many studies have been conducted to optimize QoE of service by adopting online learning methods against unknown QoE priori \cite{tran2013qoe} \cite{jiang2017pytheas} \cite{boldrini2018mumab} \cite{lu2019automating} \cite{zhou2020human} \cite{zhu2021multi} \cite{lu2022improving} \cite{xu2023resource}, most of them do not take into account the cost or budget, nor do they consider the general economic principle of diminishing marginal utility for QoE. Therefore, we aim to design an efficient online algorithm that can maximize the total diminishing marginal utility within a limited budget.

    There are three main challenges to designing such an online algorithm. The first is to seek a mechanism that can achieve an optimal trade-off between exploration and exploitation on the premise of unknown QoE. The second challenge is to maximize the utility in a budgeted setting. Lastly, due to the budgeted setting as well as the non-linear nature of the concave utility function, the sublinear theoretical guarantee of the difference between the optimal method and the proposed method, known as the \textit{regret}, can be hard to derive. To address the above challenges, we first formulate the problem as a sequential decision problem, which can be modeled by a multi-armed bandit (MAB) with budget constraints. Then, we propose an algorithm based on upper confidence bound (UCB) to tackle the optimal arm-pulling problem for maximum utility concerning costs and budgets. Specifically, we adopt the UCB index instead of empirical mean reward in order to better explore the potential for the high real expected reward of an arm. When analyzing the regret, the properties of the concave function are melted into the inference for a tight upper bound with logarithmic growth with respect to the budget. The main contributions are summarized as follows:
    \begin{itemize}
        \item We introduce a budget-constrained resource scheduling problem within a novel computing paradigm called \textit{meta computing} for the next generation of computing, where different local meta computers correspond to different unknown QoE distributions followed by users. The problem complies with the advanced requirements for meta computing to focus on the QoE of users.
        \item Unlike existing works, we consider the diminishing marginal utility in meta computing so as to model the case in reality. Thus, the utility of an MCS is depicted as a concave function of the total QoE of service users, aligned with the general principle in economics.
        \item We propose an efficient budgeted MAB algorithm based on the UCB mechanism and greedy method to maximize the cumulative utility under a limited budget. Besides, we theoretically upper bound the regret with logarithmic growth to the budget.
        \item We conduct extensive simulations and demonstrate the superior performance of the proposed algorithm over several classic baselines.
    \end{itemize}

    \textbf{Organization.} We first summarize the works related to this paper in Section \ref{sec2}. We then introduce our meta computer system and formulate the problem in Section \ref{sec3}. Next, in Section \ref{sec4} and Section \ref{sec5}, the design of our algorithm and the corresponding theoretical analysis are presented in detail, respectively. We carry out the performance evaluation in Section \ref{sec6} and conclude the paper in Section \ref{sec7}.

\section{Related Work}\label{sec2}
    In this section, we review the related works from three aspects. We first briefly introduce the development of computing paradigms. Then we investigate the recent studies on the problems of online resource scheduling and online service selection that is of similar area, with concerns for the QoE of users. Lastly, we look into the works of underpinned bandit algorithms with budget constraints.

    \textbf{Computing Paradigm.}
    The Client-Server model is a type of architecture proposed for distributed applications with a long history from 1964 \cite{witt1965ibm}, where a server provides resources or services to the clients who make the requests. Subsequently started the Desktop Computing era from the first generation of personal computers in the 1970s. In the 1990s, there appeared Grid Computing that integrates and utilizes idle computing powers such as the high-speed Internet, computers, databases, and sensors of various devices to provide high-performance computing services \cite{foster2003grid}. In 1997, the Cloud Computing paradigm was proposed \cite{chellappa1997intermediaries}, which aims to offer flexible and personalized computing services to businesses and individuals by consolidating computing tasks into a high-performance computing cluster with a virtualized data center \cite{josep2010view} \cite{armbrust2010view}. In 2005, the concept of IoT clarified \cite{strategy2005internet}. IoT computing establishes a connection between the physical world and the cyber world, allowing IoT terminal equipment to continuously collect real-time data. Driven by the real-time requirement as well as security and privacy concerns, Edge Computing was proposed, which aims to make the most of computing resources that are located close to or nearby the network edge, and only rely on remote clouds for data processing tasks when it is absolutely necessary \cite{shi2016edge} \cite{premsankar2018edge}. Recently, Meta Computing has been proposed for more efficient utilization of all accessible computing resources on the network \cite{cheng2023meta}.
    
    \textbf{QoE-Aware Online Resource Scheduling / Service Selection.} As online services tend to be user-centered oriented, QoE management has been introduced and studied for resource scheduling and service selection. Tran \textit{et al.} \cite{tran2013qoe} propose a QoE-based two-layer architecture of a content distribution network, of which the purpose is to choose a proper server to make users as satisfied as possible. They add a server-selection layer besides a routing layer and formulate the server selection as a MAB problem. Then UCB1 algorithm \cite{auer2002finite} is adopted to optimize the selection decision according to users' QoE feedback. Jiang \textit{et al.} \cite{jiang2017pytheas} further consider the divergence among application sessions when improving the QoE. They propose a grouping mechanism for sessions based on features like locations and then utilize a discounted UCB algorithm \cite{garivier2008upper} for QoE optimization on each group. Boldrini \textit{et al.} \cite{boldrini2018mumab} propose a new MAB model QoE maximization problem, where the model adds another action of measuring rather than pulling an arm only and allows actions spanning for multiple rounds in order to accommodate real-world scenarios. Zhou \textit{et al.} \cite{zhou2020human} incorporate the QoE-related context information in channel allocation scenarios for 5G network by encoding it as a stochastic contextual MAB problem, and they solve it with a generalized UCB1 method. There are also studies using MAB approaches to select deep neural network models on edge devices for optimal QoE \cite{lu2019automating} \cite{lu2022improving}. However, these works do not take into account limited budgets. Xu \textit{et al.} \cite{xu2023resource} consider the network selection problem for enlarging QoE under resource budget constraints where the network state information is unknown and dynamic to IoT devices.
    
    \textbf{MAB with Budget Constraints.} In the last decade, MAB algorithms with budget constraints have been extensively studied, where the main objective is to maximize the cumulative reward within limited budgets. Tran-Thanh \textit{et al.} \cite{tran2012knapsack} first propose a budgeted UCB-based MAB model named KUBE and theoretically upper bound the regret with sublinear growth $\mathcal{O}(\log(B))$, where $B$ is the total budget. Ding \textit{et al.} \cite{ding2013multi} extend the scenario to unknown costs. On this basis, Xia \textit{et al.} \cite{xia2015thompson} adopt Thompson sampling and prove the regret bound of $\mathcal{O}(\log(B))$. Then, Xia \textit{et al.} \cite{xia2016budgeted} study the problem of combinatorial MAB with budget constraints, where the player pulls a fixed number of arms in each round, and they propose a policy named MRCB and prove the sublinear regret. Das \textit{et al.} \cite{das2022budgeted} further release the assumptions about fixed budget per round or fixed number of arms per round and propose two UCB-based algorithms which outperform existing works. 

    \textbf{Discussion.} Our work is distinct from the existing works. First, the previous works mainly solve QoE-aware online resource scheduling problems with CDN or wireless network service selection contexts \cite{tran2013qoe} \cite{jiang2017pytheas} \cite{boldrini2018mumab} \cite{zhou2020human} \cite{lu2019automating} \cite{lu2022improving} \cite{xu2023resource}, while we consider scheduling optimization in a future-oriented computing paradigm, Meta Computing \cite{cheng2023meta}, which pioneers the new architectural design problem. Second, few of those take into account limited budgets \cite{xu2023resource}, and none involves diminishing marginal utility. Our problem considers both constraints to meet the more complex settings in the real world. Third, from the algorithmic perspective, the pervasive concave reward functions, rather than simple linear reward functions in the existing works \cite{tran2012knapsack} \cite{ding2013multi} \cite{xia2015thompson} \cite{das2022budgeted}, are first combined with budgeted MAB in this paper. Therefore, we have not only done pioneering work for the new computing paradigm of meta computing, but also integrated the concave function as an objective into the MAB model, which has made theoretical contributions to the field of online algorithms.

\section{System Model \& Problem Formulation}\label{sec3}
    In this section, we first introduce the MCS and inner resource scheduling models. Then, we abstract the problem of resource scheduling in the MCS and formulate it rigorously from both offline and online learning perspectives. The frequently used notations are summarized in Table \ref{table0}.

    \begin{figure}[!t]
	\centering
        \includegraphics[width=\linewidth]{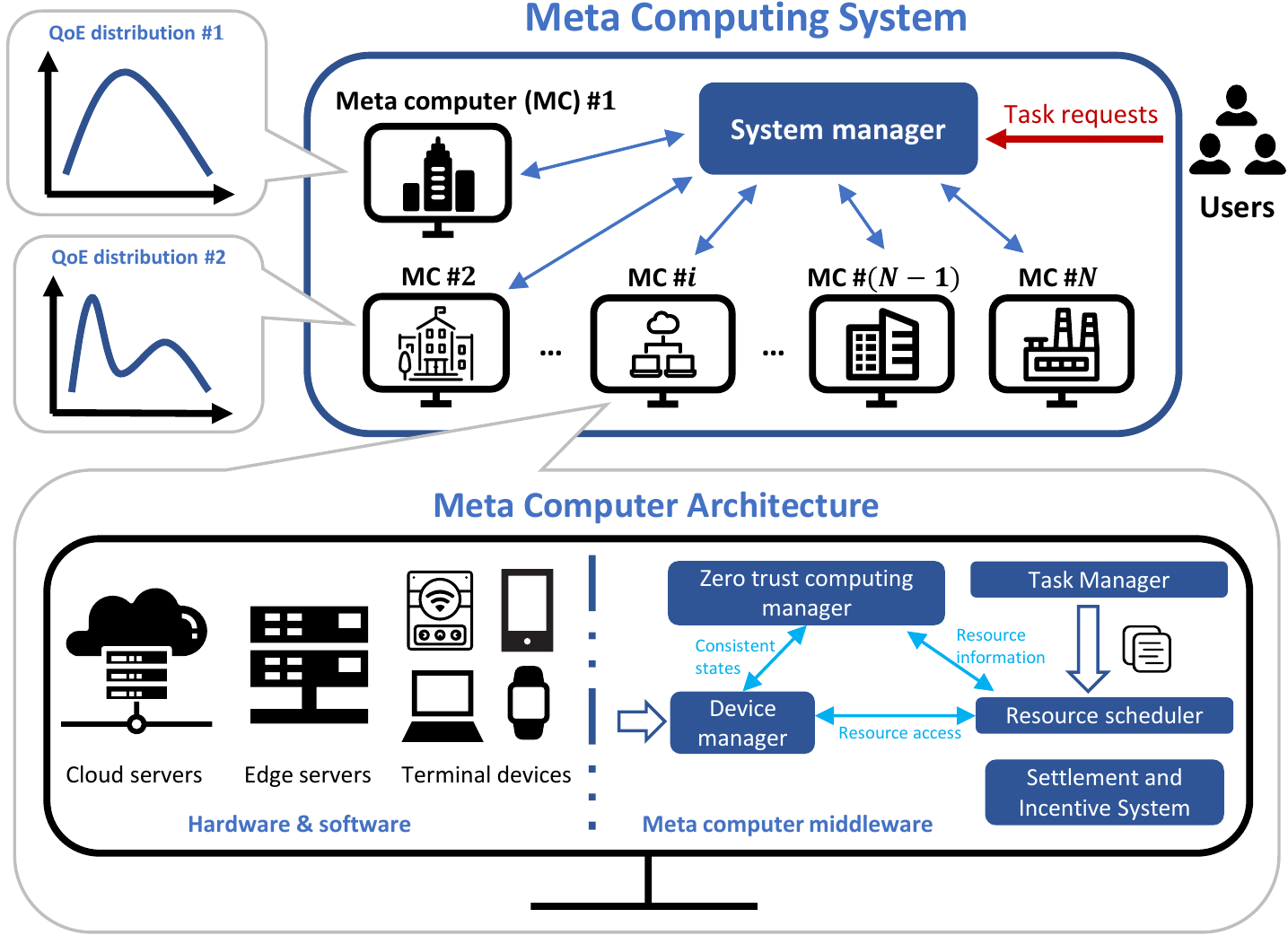}
	\caption{A meta computing system: it consists of $N$ meta computers, each of which integrates all nearby resources including clouds, edges, and terminal devices. The meta computer architecture was proposed in \cite{cheng2023meta}.}
	\label{fig0}
    \end{figure}
    
    \subsection{Meta Computing System}
        The architecture of the meta computing system (MCS) is shown in Fig. \ref{fig0}. In our setting, we consider an MCS consisting of one system manager and $N$ local meta computers (MCs), which is denoted by $\mathcal{N}=\{1,2,\cdots,i,\cdots,N\}$. Here, each MC integrates all resources, including clouds, edges, and terminal devices, in a small-scale network. Users of this system submit their computing tasks to the system manager. After parsing the received tasks, the system manager is designed to dispatch tasks of the same type to an appropriate MC. Subsequently, the task manager and resource scheduler modules within the selected MC further process the tasks for execution. 
        
        Shown as Fig. \ref{fig0}, the structure of an MC encompasses several modules including a device manager, a zero-trust computing manager, and other auxiliary modules \cite{cheng2023meta}. The device manager plays a crucial role in dynamically sensing and mapping heterogeneous online resources into a resource space, which the resource scheduler can utilize for efficient resource scheduling. The zero-trust computing manager establishes a trusted environment for resource sharing and employs blockchain techniques to synchronize the states among system components. Upon receiving computing tasks from the system manager of MCS, the task manager in the MC preprocesses the tasks based on various constraints and then forwards the decomposed task specifications to the resource scheduler for resource allocation. The incurred costs of using the hardware of MC should be paid by the MCS to the hardware resource providers, which may vary depending on task specifications and MCs. 

    \subsection{Resource Scheduling Model}
        For a certain type of routine task and a fixed set of users $\mathcal{M}=\{1,2,\cdots,j,\cdots,M\}$, the MCS selects one MC among total $N$ available MCs following some strategy to provide computing service. In this case,  the users pay monthly fees and send task requests to the MCS, while the cost of utilizing MCs needs to be covered by the MCS. 
        The timeline is discretized into time slots $\mathcal{T}=\{1,2,\cdots,t,\cdots\}$, and each of them corresponds to a round in which the MCS can complete a cycle including deploying an MC, processing the tasks and collecting the feedback from users. Utilizing the $i$-th MC in a time slot $t$ will produce the cost $M\cdot c_i$, where $c_i\in \left[c_{min}, c_{max}\right]$ is the corresponding unit cost of serving users by $i$-th MC, $c_{min}=\min_{i\in\mathcal{N}}\{c_i\}$ and $c_{max}=\max_{i\in \mathcal{N}}\{c_i\}$. We denote the budget for this type of task as $B$ which is closely dependent on the monthly fees of the corresponding users. The cost incurred by using MC for each round will be deducted from the constrained total budget $B$ until the remaining budget is insufficient to begin a new round. Moreover, the QoE of each MC follows a distribution that is unknown to the MCS. We assume that every MC possesses the adequate capacity and acceptable latency performance for all the users in $\mathcal{M}$. The discrepancies of facilities lead to different QoE among MCs. The users served by the same MC receive the same expected QoE with variations. To characterize the diminishing marginal effect on utility in the real world, for each MC, we introduce a concave utility function $g(\cdot)$ of the summation of QoE retrieved from all users \cite{vickrey1945measuring} \cite{viscusi1990utility}. From the perspective of the MCS, the objective is to maximize the accumulated utility under budget $B$.

    \begin{table}[!t]
	\renewcommand{\arraystretch}{1}
	\caption{The frequently used notations summarization}
	\label{table0}
	\centering
	\begin{tabular}{|p{0.2\linewidth}|p{0.7\linewidth}|} 
		\hline
		Notation& Description\\
		\hline
        $\mathcal{N}$, $\mathcal{M}$ & The sets of MCs in an MCS and subscribing users, respectively\\
        $N$, $M$ & The numbers of MCs and users of an MCS, respectively\\
        $\mathcal{T}$ & The set of rounds or time slots\\
        $T_i(t)$ & The number of times the QoE of the $i$-th MC has been revealed after round t\\
        $A(t)$ & The action at round $t$\\
        $B$ & The total initial budget of the MCS\\
        $B_t$ & The remaining budget after round $t$\\
        $\tau(B)$ & The last round regarding the budget\\
        $c_i$ & The unit cost of using the $i$-th MC\\
        $c_{max}$, $c_{min}$ & The maximum and minimum values of unit costs among $\mathcal{N}$\\
        $u_{ij}(t)$ & The QoE received by the user $j$ from the MCS when using MC $i$ in round $t$\\
        $U_{i}$ & The true expectation of QoE of the $i$-th MC\\
        $u_{max}$, $u_{min}$ & The maximum and minimum values of the QoE among $\mathcal{N}$\\
        $\hat{u}_i(t)$ & The empirical mean of QoE for MC $i$ in round $t$\\
        $\varepsilon_{i}(t)$ & The upper confidence bound of $\hat{u}_i(t)$\\
        $\bar{u}_i(t)$ & The UCB index of MC $i$ after round $t$\\
        $I^*$ & The optimal choice of MC from $\mathcal{N}$\\
		\hline
	\end{tabular}
    \end{table}

    \subsection{Problem Formulation}
        Let $\mathcal{N}=\{1,2,\cdots,i,\cdots,N\}$ denote the set of MCs and $\mathcal{M}=\{1,2,\cdots,j,\cdots,M\}$ denote the set of users, where we have $\lvert\mathcal{N}\rvert=N$ and $\lvert\mathcal{M}\rvert=M$. Let $u_{ij}(t)$ denote the QoE received by the user $j$ from the MCS when using MC $i$ in round $t$, which can be regarded as a random variable with an expectation of $U_i$ and is assumed to be independent of each other. In other words, the expected QoE provided by MC $i$ is $U_i$, thus the expected total QoE among all $M$ users can be expressed as $M\cdot U_i$. Note that the QoE of the users who are served by the same MC share the same distribution. For the convenience of analysis, we further constrain $u_{ij}(t)$ as $1\leq u_{min}\leq u_{ij}(t)\leq u_{max}\leq 2$, which is also applicable to other kinds of reward with different range by scaling. In round $t$, the MCS selects a certain MC $A(t)=i$ where $i\in\mathcal{N}$ and the MCS pays $M\cdot c_{A(t)}$ to rent the resources of MC $A(t)$. Define the remaining budget after round $t$ as 
        \begin{equation*}
            B_t=B-\sum_{t^{'}\leq t}M\cdot c_{A(t')},
        \end{equation*}
        then the last round regarding the budget, denoted by $\tau(B)$, which can be characterized as $B_{\tau(B)}\geq 0$ and $B_{\tau(B)+1}\leq 0$. Here, we assume that the total budget $B$ is adequate such that $\tau(B)\geq N$. In order to make the cumulative utility maximized within the constrained budget, an appropriate strategy for choosing an MC is needed for this problem. When the expected QoE of each MC is known, it is an unbounded knapsack problem and the optimal solution is to select the MC with the maximum utility-cost ratio. Let $T_i(t)$ denote the number of times the QoE of the $i$-th MC has been revealed after the round $t$. Note that when the $i$-th MC has been selected in round $t$, the corresponding QoE would be revealed for $M$ times from all the users. The unbounded knapsack problem can be formulated below.
        \begin{align}
        	\max\quad & \mathbb{E}\left[\sum_{i=1}^{N}T_i\left(\tau(B)\right)g\left(M\cdot U_i\right)\right] \notag\\
        	s.\;t.\quad & \sum_{i=1}^{N}T_i\left(\tau(B)\right)Mc_i\leq B. \notag
        \end{align}
        Here, $g(\cdot)$ is an increasing and concave function. We use it to describe the diminishing marginal utility of total QoE. The marginal income of QoE is decreasing, which conforms to the low of economics, thus avoiding blindly improving QoE.

        However, in our situation, instead of just exploiting complete prior knowledge for optimization, the expected QoE is unknown and needs to be explored through real-world service activities, which cannot be solved by offline algorithms for knapsack problems. Hence, we adopt MAB methods to strike a balance between exploration and exploitation. Specifically, for an MCS, each local MC can be regarded as an arm in MAB, while the served users share a common QoE distribution with the same expectation as a reward in MAB. The algorithm continuously revises its policy according to the updated feedback data collected by actually serving users. The feedback of QoE $u_{ij}(t)$ of MC $i$ is a series of independent and identical distributed (I.I.D.) random variables with respect to $i$, i.e. with the same expectation $U_i$. The expected QoE $U_i$ of MC $i$ can be estimated by the empirical mean $\hat{u}_i(t)$ as defined later. For convenience, we take a specific increasing and concave function $g(\cdot)=\log(\cdot)$ as an example to analyze, where $\log(\cdot)$ in this paper represents the natural logarithmic function with the base $e$. Thus the optimization problem can be re-formulated as follows.
        \begin{align}
        	\max\quad & \mathbb{E}\left[\sum_{t=1}^{\tau(B)}\log\left(\sum_{j=1}^{M}u_{A(t)j}\left(t\right)\right)\right] \notag\\
        	s.\;t.\quad & \sum_{t=1}^{\tau(B)}Mc_{A(t)}\leq B. \notag
        \end{align}
        Thus, this is the objective of our resource scheduling problem in meta computing.
        
\section{Algorithm Design}\label{sec4}
    One of the core challenges for a MAB algorithm is designing an efficient mechanism for efficiently balancing exploration and exploitation. The proposed algorithm embeds the mechanism of exploration and exploitation into the UCB index. Since there is a limited budget in our scenario, the MC with a higher utility-cost ratio is preferable, thus the algorithm always selects the MC with the largest index-related utility-cost ratio and a permitted cost. We define the empirical mean $\hat{u}_i(t)$ and UCB index $\bar{u}_i(t)$ with respect to arm $i$ after round $t$ as follows:
    \begin{align}
        \hat{u}_i(t) &=\begin{cases}\frac{\hat{u}_i(t-1)T_i(t-1)M+\sum_{j=1}^{M}u_{ij}(t)}{T_i(t-1)M+M}, & A(t)=i\\
        \hat{u}_i(t-1), & A(t)\neq i\end{cases} \notag\\
        \bar{u}_i(t) &=\hat{u}_i(t)+\varepsilon_{i}(t), \notag\\
        \varepsilon_{i}(t)&=\sqrt{{2\log\left(t\right)}/{(T_i(t)M)}}, \notag
    \end{align}
    where $\varepsilon_{i}(t)$ is the upper confidence bound and $T_i(t)$ is the number of times the QoE of MC $i$ has been revealed after the round $t$, which can be formally defined as $T_i(t)=\sum_{k=1}^{t}\mathbb{I}\{A(k)=i\}$. The increment of $T_i(t)$ can only occur if MC $i$ has been selected in round $t$:
    \begin{align}
        T_i(t)=\begin{cases}T_i(t-1)+1,& A(t)=i\\ T_i(t-1),& A(t)\neq i\end{cases}. \notag
    \end{align}

    The online algorithm based on UCB indexes is shown in Algorithm \ref{BUCB}, which is constituted of \textit{Initialization Stage} and \textit{Arm Selection Stage}. The basic framework is of the greedy method with UCB indexes to solve the exploration-exploitation trade-off. In the first stage, shown as line 3 to 9 in Algorithm \ref{BUCB}, each MC is chosen exactly once to construct the initial estimate $\hat{u}_i(t)$. In the second stage, shown as line 11 to 20 in Algorithm \ref{BUCB}, at round $t$, it selects the cost-permitted MC $I$ with the relatively largest utility-cost ratio under the remaining budget left after round $t-1$ as
    \begin{equation*}
        I\leftarrow\mathop{\arg\max}_{\{i:Mc_i\leq B_{t-1}\}} \frac{\log\left(\bar{u}_i(t-1)\right)}{c_i},
    \end{equation*}
    where we use the corresponding utility-cost ratio instead of the unknown expected QoE $U_i$. Then, we observe the value of $u_{Ij}(t)$ for each user $j\in\mathcal{M}$, re-compute the empirical mean $\hat{u}_I(t)$, and get the remaining budget $B_t$. Finally, we update the upper confidence bound $\varepsilon_i(t)$ and get the updated UCB index $\bar{u}_i(t)$ for each MC $i\in\mathcal{N}$, which will be used to select the best MC in the next round.       
    
    \begin{algorithm}[!t]
    	\caption{\text{Budgeted-UCB}}
    	\begin{algorithmic}[1]\label{BUCB}
    		\renewcommand{\algorithmicrequire}{\textbf{Input:}}
    		\renewcommand{\algorithmicensure}{\textbf{Output:}}
    		\REQUIRE $\mathcal{N}$, $\mathcal{M}$, $B$, $\vec{c}=\{c_1,\cdots,c_N\}$;
            \STATE Initialize $t=0$, $N=\lvert \mathcal{N} \rvert$, $M=\lvert \mathcal{M} \rvert$, $T_i=0$, $B_t=B$, $c_{min}=\min \vec{c}$, $\forall i\in\mathcal{N}$; \label{step1}
            \STATE // \textit{Initialization Stage}
            \FOR{each $i\in\{1,\cdots,N\}$}
                \STATE $t++$;
                \STATE Choose MC $i$ with the cost $c_{i}$;
                \STATE Collect the feedback QoE $u_{ij}(t)$ for each $j\in\mathcal{M}$;
                \STATE $\hat{u}_{i}=\frac{T_{i}M\hat{u}_{i}+\sum_{j=1}^{M}u_{ij}(t)}{T_{i}M+M}$;
                \STATE $B_t=B_{t-1}-Mc_i$, $T_i++$;
            \ENDFOR \label{step6}
            \STATE // \textit{Arm Selection Stage}
            \WHILE{$B_t\geq M\cdot c_{min}$}\label{step7}
                \STATE $t++$;
                \STATE Choose the MC $I=\mathop{\arg\max}_{\{i:Mc_i\leq B_{t-1}\}} \frac{\log\left(\bar{u}_i\right)}{c_i}$; \label{step11}
                \STATE Collect the feedback QoE $u_{Ij}(t)$ for each $j\in\mathcal{M}$;\label{step12}
                \STATE $\hat{u}_{I}=\frac{T_{I}M\hat{u}_{I}+\sum_{j=1}^{M}u_{Ij}(t)}{T_{I}M+M}$;
                \STATE $B_t=B_{t-1}-Mc_I$, $T_I++$;\label{step13}
                \FOR{each $i\in\{1,\cdots,N\}$}
                    \STATE $\varepsilon_{i}=\sqrt{\frac{2\log\left(t\right)}{T_{i}M}}$, $\bar{u}_i=\hat{u}_i+\varepsilon_{i}$;\label{step9}
                \ENDFOR
            \ENDWHILE\label{step14}
    	\end{algorithmic}
    \end{algorithm}

\section{Theoretical Analysis}\label{sec5}
For online algorithms, the performance is primarily characterized by \textit{regret}. In this section, we first analyze the expression of the regret in our problem, then we further deduce the upper bound of the regret and prove its logarithmic complexity as $\mathcal{O}(\log(B))$ theoretically. The analytic process of regret is dependent on the execution process shown as Algorithm \ref{BUCB}, which can validate the effectiveness of our proposed mechanism.
\subsection{Regret Analysis}
    The regret of a bandit algorithm refers to the reward difference between the optimal algorithm and the current one. In fact, maximizing the cumulative reward of an algorithm is essentially equivalent to minimizing the corresponding regret. Let $R_B(A)$ denote the accumulative reward of algorithm $A$ under the budget $B$, which here means the total utility throughout the rounds before running out of the budget. Then we can define the regret of $A$ under the budget $B$ as $Reg_B(A)$. That is
    \begin{equation}\label{eq8}
        Reg_B\left(A\right)=R_B\left(A^*\right)-R_B\left(A\right),
    \end{equation}
    where $A^*$ represents the optimal algorithm with known expected utility values of all the MCs. In this situation, the optimal algorithm is obviously choosing the MC with the greatest utility-cost ratio among all the applicable MCs in each round. We denote the optimal choice of MC as $I^*$, thus we have
    \begin{equation}\label{eq9}
        I^*=\mathop{\arg\max}_{i\in\mathcal{N}} \frac{\log\left(U_i\right)}{c_i}.
    \end{equation}
    Thus we can give the expressions of $R_B(A^*)$ and $R_B(A)$ as below. Note that the number of rounds to select $I^*$ with respect to the optimal algorithm should be at most $\lfloor B/(M\cdot c_{I^*})\rfloor$. Thus, we have
    \begin{align}  
        &R_B\left(A^*\right)\leq\sum_{t=1}^{\lfloor B/(Mc_{I^*})\rfloor+1 }\log\left(\sum_{j=1}^{M}u_{I^*j}\left(t\right)\right),\label{eq10}\\
        &R_B\left(A\right)=\sum_{t=1}^{\tau(B)}\log\left(\sum_{j=1}^{M}u_{A(t)j}\left(t\right)\right).\label{eq11}
    \end{align}
    Therefore, based on Eqn. (\ref{eq8}) (\ref{eq9}) (\ref{eq10}) (\ref{eq11}), the upper bound of the regret can be further expressed as follows.
    \begin{align}
        Reg_B\left(A\right)\leq&\sum_{t=1}^{\lfloor B/(Mc_{I^*})\rfloor+1}\log\left(\sum_{j=1}^{M}u_{I^*j}\left(t\right)\right) - 
 \notag\\
        &\sum_{t=1}^{\tau(B)}\log\left(\sum_{j=1}^{M}u_{A(t)j}\left(t\right)\right). \notag
    \end{align}
    As aforementioned, the QoE $u_{ij}(t)$ of an MC follows an unknown distribution as a random variable. In addition, the budget constraint leads to a variable total number of rounds $\tau(B)$ based on $u_{ij}(t)$ and the specific algorithm exploited, which also grants randomness to $\tau(B)$. Accordingly, we consider the expectation of $Reg_B\left(A\right)$ due to its stochastic setting and re-defined the problem as:
        \begin{align}
        	\min\quad & \mathbb{E}_{\tau(B),\{A(t)\}}\left[Reg_B\left(A\right)\right] \notag\\
        	s.\;t.\quad & \sum_{t=1}^{\tau(B)}Mc_{A(t)}\leq B. \notag
        \end{align}
    The objective is to design an algorithm with sublinear regret in reference to the total budget $B$, as even the worst algorithm can result in linear regret. Here, both $\tau(B)$ and $i$ are concerned with the expectation of $Reg_B\left(A\right)$, which brings in extra challenges to analysis.
    
    For clarity of analysis, we define the difference of costs and the smallest difference of utility-cost ratios regarding the suboptimal MC $i$ and the optimal MC $I^*$:
    \begin{align}
        &\delta_i=c_i-c_{I^*}, \notag\\
        &\Delta_i=\frac{\log\left(U_{I^*}\right)}{c_{I^*}}-\frac{\log\left(U_{i}\right)}{c_{i}}, \notag\\ 
        &\Delta_{min}=\min_{i}\Delta_i. \notag
    \end{align}

\subsection{Regret Upper Bound}
    The total regret $Reg_A\left(B\right)$ is closely related to the total number of times arm $i$ has been chosen, $T_i(\tau(B))$, as well as the total number of rounds, $\tau(B)$. Once an MC $i$ is chosen, the difference of utility-cost ratios $\Delta_i$ of this round becomes fixed. In other words, the regret can be determined by knowing $T_i(\tau(B))$ given $\tau(B)$. Before the detailed regret analysis, we introduce some auxiliary lemmas.

    \begin{lem}[Mean value theorem] \label{mvt}
        If function $f(\cdot)$ is continuous and differentiable on $(a,b)$, then there exists $\xi\in (a,b)$ such that:
        \begin{equation}
            f'(\xi)=\frac{f(b)-f(a)}{b-a}, \notag
        \end{equation}
        where $f'(\cdot)$ denotes the derivative of $f(\cdot)$.
    \end{lem}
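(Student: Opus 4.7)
The plan is to reduce the statement to Rolle's theorem by the classical tilt-and-subtract trick. First I would introduce the secant line of $f$ between the endpoints, namely the affine function $\ell(x)=f(a)+\frac{f(b)-f(a)}{b-a}(x-a)$, and form the auxiliary function $h(x)=f(x)-\ell(x)$. Since $\ell$ is a polynomial, $h$ inherits from $f$ all the regularity we need: continuity on the closed interval and differentiability on the open interval. A direct substitution shows $h(a)=0=h(b)$, which sets up the hypotheses of Rolle's theorem.

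Next I would invoke Rolle's theorem on $h$ to obtain some $\xi\in(a,b)$ with $h'(\xi)=0$. Since $h'(x)=f'(x)-\frac{f(b)-f(a)}{b-a}$, rearranging at $x=\xi$ gives exactly the conclusion $f'(\xi)=\frac{f(b)-f(a)}{b-a}$. The algebra is a one-liner once $h$ is in hand, so the whole proof reduces to verifying Rolle's theorem.

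For Rolle's theorem itself, the approach is standard: apply the extreme value theorem to obtain a maximum $x_M$ and minimum $x_m$ of $h$ on $[a,b]$; if either lies in the open interval $(a,b)$, Fermat's stationary-point theorem forces $h'$ to vanish there, while if both coincide with the endpoints then $h$ is identically zero (since $h(a)=h(b)$) and any interior point witnesses the claim.

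The main obstacle, if one wanted a fully self-contained argument, would be the foundational pieces underneath Rolle's theorem: the extreme value theorem, which depends on compactness of $[a,b]$, and Fermat's theorem on interior extrema. Since Lemma~\ref{mvt} is invoked only as an off-the-shelf calculus fact — the downstream use will clearly be to compare values of $\log(\cdot)$ in the regret bound — I would not reprove these primitives in the paper, and would simply cite a standard analysis reference, keeping the exposition focused on the bandit-theoretic content.
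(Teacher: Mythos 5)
Your argument is the standard and correct one: reduce to Rolle's theorem via the auxiliary function $h(x)=f(x)-\ell(x)$, and your instinct to simply cite a standard reference rather than reprove the calculus primitives is exactly what the paper does (its ``proof'' is just a citation to a textbook). Nothing is missing.
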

    \begin{proof}
        The proof of this well-known mean value theorem can be found, for example, in \cite{sahoo1998mean}.
    \end{proof}

    \begin{lem}[Hoeffding's inequality]\label{hoeffding}
        Given a sequence of random variables, denoted as $X_1,X_2,\cdots,X_n$, such that $a_i\leq X_i\leq b_i$ almost surely. Denote the sum of these random variables as $S_n=X_1+\cdots+X_n$. Then, for all $\varepsilon>0$, we have:
        \begin{align}
            \mathbb{P}\left(S_n-\mathbb{E}\left[S_n\right]\geq \varepsilon\right) \leq \exp\left(-\frac{2\varepsilon^2}{\sum_{i=1}^{n}(b_i-a_i)^2}\right),\notag \\
            \mathbb{P}\left(S_n-\mathbb{E}\left[S_n\right]\geq \varepsilon\right) \leq \exp\left(-\frac{2\varepsilon^2}{\sum_{i=1}^{n}(b_i-a_i)^2}\right). \notag
        \end{align}
    \end{lem}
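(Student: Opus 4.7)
The plan is to prove Hoeffding's inequality via the classical Chernoff-bound technique. First, for any tuning parameter $s > 0$ I would apply Markov's inequality to the exponentially transformed deviation, obtaining $\mathbb{P}(S_n - \mathbb{E}[S_n] \geq \varepsilon) \leq e^{-s\varepsilon}\,\mathbb{E}[e^{s(S_n - \mathbb{E}[S_n])}]$. By the (implicitly assumed) independence of the $X_i$'s, the moment generating function on the right factors as $\prod_{i=1}^{n}\mathbb{E}[e^{s(X_i - \mathbb{E}[X_i])}]$, which reduces the problem to controlling the MGF of each centred bounded summand.

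Next, I would invoke Hoeffding's lemma: for any $Y$ with $\mathbb{E}[Y] = 0$ and $a \leq Y \leq b$ almost surely, one has $\mathbb{E}[e^{sY}] \leq \exp(s^2(b-a)^2/8)$. This is established by bounding $e^{sY}$ above by its linear interpolant between the endpoints (using convexity of the exponential), taking expectations, and then showing via a Taylor expansion of the resulting log-MGF that it does not exceed the quadratic $s^2(b-a)^2/8$. Applied factor by factor, this yields
\begin{equation*}
\mathbb{P}(S_n - \mathbb{E}[S_n] \geq \varepsilon) \leq \exp\!\left(-s\varepsilon + \frac{s^2}{8}\sum_{i=1}^{n}(b_i - a_i)^2\right).
\end{equation*}

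Finally, I would optimise the right-hand side over $s > 0$. Setting the derivative of the exponent to zero produces the minimiser $s^{\star} = 4\varepsilon / \sum_{i=1}^{n}(b_i - a_i)^2$, and substituting back recovers the advertised tail bound $\exp\!\bigl(-2\varepsilon^2 / \sum_{i=1}^{n}(b_i - a_i)^2\bigr)$. The companion lower-tail inequality (which the statement evidently intends, despite the duplicated display line) follows by running the identical argument with $-X_i$ in place of $X_i$.

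The main obstacle is really the proof of Hoeffding's lemma itself, since the Chernoff machinery around it is essentially mechanical once the per-summand MGF bound is in hand. Because the inequality is a classical 1963 result invoked here as a standard tool rather than a novel contribution, the cleanest exposition is to cite a standard reference for the MGF bound and present only the Markov-plus-independence-plus-optimisation skeleton, mirroring the way Lemma \ref{mvt} was handled above.
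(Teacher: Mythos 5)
Your outline is the standard Chernoff--Hoeffding argument (Markov's inequality on the MGF, factorization by independence, Hoeffding's lemma $\mathbb{E}[e^{sY}]\leq \exp(s^2(b-a)^2/8)$ for each centred summand, then optimisation over $s$) and is correct; the paper itself simply cites Hoeffding (1963) and Boucheron et al.\ for this lemma, which is exactly the treatment you recommend in your closing paragraph. The one point worth flagging is that independence of the $X_i$ is genuinely required for the MGF to factor, and the lemma as stated omits this hypothesis (and also prints the upper-tail bound twice instead of stating the lower tail), so your parenthetical ``implicitly assumed'' is doing real work.
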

    \begin{proof}
        The proof can be found in \cite{hoeffding1963probability}\cite{boucheron2013concentration}.
    \end{proof}

    \begin{lem}\label{log_ineq}
        Consider the natural logarithmic function $\log(\cdot)$. If $x,y\geq1$ and $x>y$, then $\log(x)-\log(y)<x-y$. Also, a variant of this lemma claims that if $\alpha\geq1$ and $\beta>0$, then $\log(\alpha+\beta)<\log(\alpha)+\beta$.
    \end{lem}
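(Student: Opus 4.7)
The plan is to derive both statements as immediate consequences of the Mean Value Theorem (Lemma~\ref{mvt}) applied to $\log(\cdot)$, whose derivative is $1/t$. The whole argument hinges on the observation that $1/t < 1$ whenever $t > 1$, and the two hypotheses ($x, y \geq 1$ for the first claim and $\alpha \geq 1$ for the variant) are precisely what is needed to pin the mean-value point into the region where this bound holds strictly.

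For the first claim, I would fix $x > y \geq 1$ and apply Lemma~\ref{mvt} to $f(t) = \log(t)$ on $[y, x]$; since $f$ is continuous and differentiable there, the theorem furnishes some $\xi \in (y, x)$ with $(\log(x) - \log(y))/(x - y) = 1/\xi$. Because the interval supplied by Lemma~\ref{mvt} is open, $\xi > y \geq 1$ forces $\xi > 1$, and hence $1/\xi < 1$ strictly; multiplying through by the positive quantity $x - y$ yields $\log(x) - \log(y) < x - y$. For the variant, I would simply set $x = \alpha + \beta$ and $y = \alpha$: the hypotheses $\alpha \geq 1$ and $\beta > 0$ deliver $x > y \geq 1$, so the first part immediately gives $\log(\alpha + \beta) - \log(\alpha) < \beta$, which rearranges to the stated inequality.

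There is essentially no obstacle here; the lemma is genuinely elementary. The one point worth a moment's care is the strictness of $1/\xi < 1$ in the boundary case $y = 1$, but this is handled automatically because Lemma~\ref{mvt} places $\xi$ in the \emph{open} interval $(y, x)$, so $\xi$ is strictly bigger than the lower endpoint even when that endpoint equals $1$. No appeal to Hoeffding's inequality or to any probabilistic structure is required, which is consistent with the purely analytic role this lemma will play when controlling the logarithmic reward differences that arise in the regret computation.
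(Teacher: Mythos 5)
Your proof is correct and follows exactly the same route as the paper's: apply the Mean Value Theorem to $\log(\cdot)$ on $(y,x)$, observe that the resulting $\xi>1$ gives $1/\xi<1$, and obtain the variant by substituting $x=\alpha+\beta$, $y=\alpha$. Your added remark about strictness in the boundary case $y=1$ via the openness of the interval is a minor refinement the paper leaves implicit.
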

    \begin{proof}
        According to the mean value theorem in Lemma \ref{mvt}, we have:
        \begin{equation}
            \frac{\log(x)-\log(y)}{x-y}=\left(\log(\xi)\right)'=\frac{1}{\xi},\quad \xi\in(y,x). \notag
        \end{equation}
        Since $\xi>1$, $1/\xi<1$, thus $\log(x)-\log(y)<x-y$. Based on this conclusion, we let $x=\alpha+\beta$ and $y=\alpha$, then we have $\log(\alpha+\beta)-\log(\alpha)<(\alpha+\beta)-\alpha$ which is essentially the variant lemma of $\log(\alpha+\beta)<\log(\alpha)+\beta$.
    \end{proof}

    \begin{lem}\label{1to3}
        Let $\bar{\mu}_{i,s_i}$ denotes the empirical mean utility of MC $i$ when $T_i(t)=s_i$ given a fixed time $t$. We can define the following events as 
        \begin{align}
            \Phi &= \left\{\frac{\log\left(\bar{\mu}_{i,s_i}\right)}{c_i} \geq \frac{\log\left(\bar{\mu}_{I^*,s_{I^*}}\right)}{c_{I^*}}\right\}, \notag\\
            \Gamma &= \left\{\frac{\log\left(\hat{u}_{I^*,s_{I^*}}+\varepsilon_{I^*}(t)\right)}{c_{I^*}} \leq \frac{\log\left(U_{I^*}\right)}{c_{I^*}}\right\}, \notag\\
            \Lambda &= \left\{\frac{\log\left(U_i+\varepsilon_{i}(t)\right)}{c_i} \leq \frac{\log\left(\hat{u}_{i,s_i}\right)}{c_i}\right\}, \notag\\
            \Omega &= \left\{\frac{\log\left(U_{I^*}\right)}{c_{I^*}} \leq \frac{\log\left(U_i+2\varepsilon_{i}(t)\right)}{c_i}\right\}. \notag
        \end{align}
        If $\Phi$ holds, at least one of the three events $\Gamma, \Lambda, \Omega$ holds. Moreover, if taking $\ell=\frac{8\log(t)}{Mc_{i}^2\Delta_{i}^2}$ and $T_i(t)\geq \ell$ where $\Delta_{i:i\neq I^*}=\frac{\log(U_{I^*})}{c_{I^*}}-\frac{\log(U_{i})}{c_{i}}$, then we have $\mathbb{P}(\Omega)=0$.
    \end{lem}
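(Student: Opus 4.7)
\textbf{Proof plan for Lemma \ref{1to3}.}

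The plan is to prove the first assertion by contraposition: assume $\Gamma$, $\Lambda$, and $\Omega$ all fail simultaneously and derive a contradiction with $\Phi$. From $\neg \Gamma$ I obtain $\log(\hat{u}_{I^*,s_{I^*}}+\varepsilon_{I^*}(t))/c_{I^*} > \log(U_{I^*})/c_{I^*}$, that is, $\log(\bar{\mu}_{I^*,s_{I^*}})/c_{I^*} > \log(U_{I^*})/c_{I^*}$, recalling that $\bar{\mu}_{i,s_i} = \hat{u}_{i,s_i} + \varepsilon_i(t)$. From $\neg \Lambda$ I get $\log(\hat{u}_{i,s_i})/c_i < \log(U_i + \varepsilon_i(t))/c_i$, and since adding $\varepsilon_i(t)$ to $\hat{u}_{i,s_i}$ inside the logarithm and scaling by the same positive $c_i$ preserves inequality after applying the bound of Lemma \ref{log_ineq}, this will yield $\log(\bar{\mu}_{i,s_i})/c_i \leq \log(U_i + 2\varepsilon_i(t))/c_i$. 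From $\neg \Omega$ I get $\log(U_{I^*})/c_{I^*} > \log(U_i + 2\varepsilon_i(t))/c_i$. Chaining the three inequalities together gives $\log(\bar{\mu}_{i,s_i})/c_i < \log(\bar{\mu}_{I^*,s_{I^*}})/c_{I^*}$, which is exactly $\neg \Phi$, contradicting the hypothesis. Hence $\Phi$ implies $\Gamma \cup \Lambda \cup \Omega$.

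For the second assertion, I translate $\Omega$ into a condition on $\varepsilon_i(t)$. Assume $\Omega$ holds. Since $U_i \geq u_{min} \geq 1$ and $2\varepsilon_i(t) > 0$, the variant of Lemma \ref{log_ineq} yields $\log(U_i + 2\varepsilon_i(t)) < \log(U_i) + 2\varepsilon_i(t)$. Dividing by $c_i > 0$ and using $\Omega$ together with the definition $\Delta_i = \log(U_{I^*})/c_{I^*} - \log(U_i)/c_i$, I obtain
\begin{equation*}
\frac{\log(U_i)}{c_i} + \Delta_i \;=\; \frac{\log(U_{I^*})}{c_{I^*}} \;\leq\; \frac{\log(U_i + 2\varepsilon_i(t))}{c_i} \;<\; \frac{\log(U_i)}{c_i} + \frac{2\varepsilon_i(t)}{c_i},
\end{equation*}
which simplifies to $\varepsilon_i(t) > c_i \Delta_i / 2$.

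Substituting the explicit form $\varepsilon_i(t) = \sqrt{2\log(t)/(T_i(t)M)}$ and squaring gives $T_i(t) < 8\log(t)/(M c_i^2 \Delta_i^2) = \ell$. Contrapositively, whenever $T_i(t) \geq \ell$ the event $\Omega$ cannot occur, so $\mathbb{P}(\Omega) = 0$ deterministically under this sample-count condition.

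The main obstacle I anticipate is the first part, specifically how to cleanly combine $\neg \Lambda$ with the $+2\varepsilon_i(t)$ appearing inside $\Omega$: the step going from $\log(U_i + \varepsilon_i(t))/c_i$ to $\log(U_i + 2\varepsilon_i(t))/c_i$ needs the monotonicity of $\log$ and the fact that an extra $\varepsilon_i(t)$ on both sides preserves inequality, so I must be careful that all quantities remain in the domain where Lemma \ref{log_ineq} applies (i.e., the arguments of the logarithm are at least $1$, guaranteed by $U_i, \hat{u}_{i,s_i} \geq u_{min} \geq 1$). The second part is essentially bookkeeping once the variant of Lemma \ref{log_ineq} is invoked.
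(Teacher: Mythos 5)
Your proposal is correct and follows essentially the same route as the paper: the first assertion via the chain $\neg\Gamma,\ \Phi,\ \neg\Lambda \Rightarrow \neg(\neg\Omega)$ (you phrase it as a contrapositive, the paper as a contradiction, which is the same argument), and the second assertion via the variant of Lemma \ref{log_ineq} to reduce $\Omega$ to $\varepsilon_i(t) > c_i\Delta_i/2$ and hence $T_i(t) < \ell$. One cosmetic note: the step from $\hat{u}_{i,s_i} < U_i + \varepsilon_i(t)$ to $\log(\bar{\mu}_{i,s_i}) < \log(U_i + 2\varepsilon_i(t))$ needs only the monotonicity of $\log$ (add $\varepsilon_i(t)$ to both sides and take logarithms), not Lemma \ref{log_ineq}, which is reserved for the second assertion.
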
   
    
\begin{proof}
    For the first proposition of Lemma \ref{1to3}, we use contradiction to prove it. Assume none of the events $\Gamma, \Lambda, \Omega$ hold on condition that $\Phi$ holds, i.e. $\Phi, \neg \Gamma, \neg \Lambda, \neg \Omega$ hold. Then from $\neg \Gamma$ we have $\hat{u}_{I^*,s_{I^*}}+\varepsilon_{I^*}(t)>U_{I^*}$, and from $\neg \Lambda$ we have $\hat{u}_{i,s_i}+\varepsilon_{s_i}(t)<U_i+2\varepsilon_{s_i}(t)$. Combining $\Phi$, the following relationship can be obtained:
    \begin{align}
        \frac{\log\left(U_{I^*}\right)}{c_{I^*}}&< \frac{\log\left(\hat{u}_{I^*,s_{I^*}}+\varepsilon_{I^*}(t)\right)}{c_{I^*}} \notag\\
        &\leq \frac{\log\left(\hat{u}_{i,s_i}+\varepsilon_{i}(t)\right)}{c_i} < \frac{\log\left(U_i+2\varepsilon_{i}(t)\right)}{c_i}, \notag
    \end{align}
    which contradicts $\neg \Omega$, thus the first proposition holds. Thus, it is impossible that none of the events $\Gamma, \Lambda, \Omega$ holds.

    For the second proposition of Lemma \ref{1to3}, We first introduce an inequality by applying the variant of Lemma \ref{log_ineq}: 
    \begin{equation}\label{eq25}
        \log(U_i+2\varepsilon_{i}(t))<\log(U_i)+2\varepsilon_{i}(t).
    \end{equation}
    Then, let $\ell=\frac{8\log(t)}{Mc_{i}^2\Delta_{i}^2}$, we have:
    \begin{align}
        \frac{\log\left(U_{I^*}\right)}{c_{I^*}}&- \frac{\log\left(U_i+2\varepsilon_{i}(t)\right)}{c_i} \notag\\
        &>\frac{\log(U_{I^*})}{c_{I^*}} - \frac{\log(U_{i})}{c_{i}} - \frac{2\varepsilon_{i}(t)}{c_i} \notag\\ 
        &=\frac{\log(U_{I^*})}{c_{I^*}} - \frac{\log(U_{i})}{c_{i}} - \frac{2\sqrt{\frac{2\log\left(t\right)}{T_i(t)M}}}{c_i}\notag\\
        &\geq\frac{\log(U_{I^*})}{c_{I^*}} - \frac{\log(U_{i})}{c_{i}} - \frac{
        2\sqrt{\frac{2\log\left(t\right)}{\ell M}}}{c_i} \notag\\
        &=\Delta_{i} - \Delta_{i}\notag\\
        &=0.\notag
    \end{align}
    Therefore, we have
    \begin{equation}
        \frac{\log\left(U_{I^*}\right)}{c_{I^*}} > \frac{\log\left(U_i+2\varepsilon_{i}(t)\right)}{c_i}, \notag
    \end{equation}
    and then $\mathbb{P}(\Omega)=0$, where the first inequality above is obtained from Eqn. (\ref{eq25}).
\end{proof}
        
    \begin{lem}\label{conf_bound}
       Given a fixed time $t$ and $1\leq u_{ij}(t)\leq 2$, then we have
       \begin{align}
           &\mathbb{P}\left(\hat{u}_{i,s_i}-U_i\geq \varepsilon_{i}(t)\right) \leq t^{-4}, \notag\\
           &\mathbb{P}\left(\hat{u}_{i,s_i}-U_i\leq -\varepsilon_{i}(t)\right) \leq t^{-4}. \notag
       \end{align}
    \end{lem}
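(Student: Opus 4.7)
The plan is to apply Hoeffding's inequality (Lemma \ref{hoeffding}) directly to the samples that define $\hat{u}_{i,s_i}$. By the update rule in Algorithm \ref{BUCB}, whenever MC $i$ is selected, $M$ independent QoE samples (one per user in $\mathcal{M}$) are observed. Thus when $T_i(t) = s_i$, the empirical mean $\hat{u}_{i,s_i}$ is the average of $n = s_i \cdot M$ i.i.d. random variables, each with expectation $U_i$ and each almost surely bounded in $[1,2]$, so that $b_k - a_k = 1$ for every $k$.

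First, I would let $S_n = \sum_{k=1}^{n} X_k$ denote the sum of these $n = s_i M$ samples, so that $\hat{u}_{i,s_i} = S_n/n$ and $\mathbb{E}[S_n] = n U_i$. Rewriting the deviation event as
\begin{equation*}
\{\hat{u}_{i,s_i} - U_i \geq \varepsilon_i(t)\} = \{S_n - \mathbb{E}[S_n] \geq n\varepsilon_i(t)\},
\end{equation*}
Hoeffding's inequality gives
\begin{equation*}
\mathbb{P}\bigl(S_n - \mathbb{E}[S_n] \geq n\varepsilon_i(t)\bigr) \leq \exp\!\left(-\frac{2(n\varepsilon_i(t))^2}{\sum_{k=1}^{n}(b_k - a_k)^2}\right) = \exp\!\left(-2 n \varepsilon_i(t)^2\right),
\end{equation*}
since the denominator equals $n$.

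Next, I would plug in the definition $\varepsilon_i(t) = \sqrt{2\log(t)/(T_i(t)M)} = \sqrt{2\log(t)/n}$, so that $2 n \varepsilon_i(t)^2 = 4\log(t)$, yielding the bound $\exp(-4\log(t)) = t^{-4}$, which is exactly the first inequality. The lower-tail inequality follows by the same argument applied to $-X_k$ (equivalently, the second form of Hoeffding's inequality stated in Lemma \ref{hoeffding}), giving the symmetric bound $t^{-4}$.

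I do not anticipate a main obstacle: the proof is a standard instantiation of Hoeffding's inequality, and the only subtlety worth flagging is that each round contributes $M$ fresh i.i.d. samples (not one), which is why $n = s_i M$ appears in the exponent and why the choice of constant $2$ inside the square root in $\varepsilon_i(t)$ yields the desired exponent $-4$ after cancellation.
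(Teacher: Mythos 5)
Your proof is correct and follows essentially the same route as the paper: both multiply through by $n = s_i M$ to convert the event on the empirical mean into a deviation of the sum $S_n$, apply Hoeffding's inequality with range $[1,2]$ so the denominator is $n$, and substitute $\varepsilon_i(t) = \sqrt{2\log(t)/(s_i M)}$ to obtain the exponent $-4\log(t)$ and hence $t^{-4}$. Your explicit flagging that each round contributes $M$ fresh samples is a useful clarification, and your handling of the lower tail by symmetry matches the paper's (which, incidentally, misstates the second line of its Hoeffding lemma as a duplicate of the first, so your appeal to the $-X_k$ argument is the cleaner justification).
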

    
    \begin{proof}
        We prove the first inequality, and the proof of the second is similar.
        \begin{align}
            \mathbb{P}&\left(\hat{u}_{i,s_i}-U_i\geq \varepsilon_{i}(t)\right) \notag\\
            &=\mathbb{P}\left(s_{i}M\hat{u}_{i,s_i}-s_{i}MU_{i}\geq s_{i}M\varepsilon_{i}(t)\right) \notag\\
            &\leq \exp\left(-\frac{2\left(s_{i}M\varepsilon_{i}(t)\right)^2}{\sum_{i=1}^{s_{i}M}(2-1)^2}\right) \notag\\ 
            &= \exp\left(-{2\left(s_{i}M\sqrt{\frac{2\log\left(t\right)}{s_{i}M}}\right)^2}/{(s_{i}M)}\right) \notag\\
            &= t^{-4}. \notag
        \end{align}
        The first inequality is obtained from Hoeffding's inequality shown as Lemma \ref{hoeffding}. By pre-setting the value pattern of $\varepsilon_{i}(t)$, we manage to determine the confidence level with $t^{-4}$, which diminishes over time.
    \end{proof}

    In order to bound $\mathbb{E}\left[Reg_B\left(A\right)\right]$, we first bound $\mathbb{E}[T_i(\tau(B))\vert\tau(B)]$ and $\mathbb{E}[\tau(B)]$. 

    \begin{thm}\label{Ti_bound}
        Given the total number of rounds $\tau(B)$ within the budget $B$, considering any MC $i\in\mathcal{N}$, the expectation of $T_i(\tau(B))$ can be upper bounded as:
        \begin{equation}
            \mathbb{E}\left[T_i(\tau(B))\vert \tau(B)\right]\leq\frac{8\log(\tau(B))}{Mc_i^2\Delta_i^2} + 1 + \frac{\pi^2}{3}. \notag
        \end{equation}
    \end{thm}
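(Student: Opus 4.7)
The plan is to mimic the classical UCB1 regret analysis, but conditioned on $\tau(B)$, and to exploit the decomposition already prepared in Lemma \ref{1to3} together with the concentration bounds of Lemma \ref{conf_bound}. The goal is to write $T_i(\tau(B))$ as a sum of indicator events and bound the tail of this sum by showing that ``selection of a sub-optimal arm $i$ after it has been pulled $\ell$ times'' is a low-probability event.

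First, using that each MC is pulled exactly once during the Initialization Stage, I would write
\begin{equation*}
T_i(\tau(B)) \;\leq\; \ell + \sum_{t=N+1}^{\tau(B)} \mathbb{I}\!\left\{A(t)=i,\; T_i(t-1)\geq \ell\right\}
\end{equation*}
for the threshold $\ell=\lceil 8\log(\tau(B))/(Mc_i^2\Delta_i^2)\rceil$ chosen to match Lemma \ref{1to3}. The rule on line \ref{step11} of Algorithm \ref{BUCB} forces the event $\{A(t)=i\}$ to imply $\Phi$ with $s_i=T_i(t-1)$ and $s_{I^*}=T_{I^*}(t-1)$; hence, taking a union bound over the possible values $1\leq s_{I^*}\leq t-1$ and $\ell\leq s_i\leq t-1$, the indicator is controlled by
\begin{equation*}
\mathbb{I}\{A(t)=i,\,T_i(t-1)\geq\ell\}\;\leq\;\sum_{s_{I^*}=1}^{t-1}\sum_{s_i=\ell}^{t-1}\mathbb{I}\{\Gamma\cup\Lambda\cup\Omega\}.
\end{equation*}
Because Lemma \ref{1to3} already guarantees $\mathbb{P}(\Omega)=0$ under $T_i(t)\geq\ell$, only $\Gamma$ and $\Lambda$ remain, and each of these reduces, after stripping the monotone $\log(\cdot)/c$ wrapper, to the one-sided deviation events $\{\hat{u}_{I^*,s_{I^*}}-U_{I^*}\leq-\varepsilon_{I^*}(t)\}$ and $\{\hat{u}_{i,s_i}-U_i\geq\varepsilon_i(t)\}$ that Lemma \ref{conf_bound} bounds by $t^{-4}$.

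Taking expectations, summing the double union bound, and using $\sum_{s_i,s_{I^*}=1}^{t-1} 2t^{-4}\leq 2 t^{-2}$, I get
\begin{equation*}
\mathbb{E}\!\left[T_i(\tau(B))\,\middle|\,\tau(B)\right]\;\leq\;\ell\;+\;\sum_{t=1}^{\infty} 2t^{-2}\;\leq\;\frac{8\log(\tau(B))}{Mc_i^2\Delta_i^2}+1+\frac{\pi^2}{3},
\end{equation*}
where the ``$+1$'' absorbs the single initialization pull of arm $i$ together with the ceiling in $\ell$, and the series sum invokes the Basel identity $\sum_{t\geq 1}t^{-2}=\pi^2/6$. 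This matches exactly the stated bound.

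The main obstacle is a bookkeeping one rather than a conceptual one: I must be careful that the event $\Phi$ in Lemma \ref{1to3} is stated at a \emph{fixed} $(s_i,s_{I^*})$, whereas the algorithm picks these at random counts $T_i(t-1),T_{I^*}(t-1)$; the resolution is the standard UCB1 trick of dominating the random-index event by the deterministic double sum over all feasible $(s_i,s_{I^*})$. The second subtlety is ensuring that the reduction of $\Gamma$ and $\Lambda$ to the Hoeffding-type events is valid for the log-ratio formulation, which follows because $\log(\cdot)/c_i$ is strictly monotone on $[u_{\min},u_{\max}]\subseteq[1,2]$, so inequalities transfer without loss.
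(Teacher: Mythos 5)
Your proposal is correct and follows essentially the same route as the paper's proof: the same decomposition of $T_i(\tau(B))$ with the threshold $\ell$, the same union bound over the pull counts $(s_i,s_{I^*})$, the same appeal to Lemma \ref{1to3} (with $\mathbb{P}(\Omega)=0$) and Lemma \ref{conf_bound}, and the same Basel-series tail giving $\pi^2/3$. The only cosmetic difference is that you fix $\ell$ via $\log(\tau(B))$ rather than the paper's $t$-dependent $\log(t)$, which is a harmless (arguably cleaner) variant since the larger threshold still forces $\mathbb{P}(\Omega)=0$ at every $t\leq\tau(B)$.
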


    \begin{proof}
        We first upper bound $T_i(\tau(B))$, then take the expectation of it to get the result. To simplify the proof, we sometimes omit the notation of \textit{conditional to $\tau\left(B\right)$}, i.e. all the probabilities are considered to be conditional on a given $\tau\left(B\right)$. Then, we have $T_i\left(\tau\left(B\right)\right)=$
        \begin{flalign}
            &= 1+\sum_{t=N+1}^{\tau(B)}\mathbb{I}\{A\left(t\right)=i\}\notag\\
            &=1+\sum_{t=N+1}^{\tau(B)}\mathbb{I}\{A\left(t\right)=i,\ T_i(t-1)\geq \ell\}\notag\\
            &\qquad+\sum_{t=N+1}^{\tau(B)}\mathbb{I}\{A\left(t\right)=i,\ T_i(t-1)<\ell\}\notag\\
            &\leq \ell + \sum_{t=N+1}^{\tau(B)}\mathbb{I}\{A\left(t\right)=i,\ T_i(t-1)\geq \ell\}\notag\\
            &\leq \ell + \sum_{t=N+1}^{\tau(B)}\mathbb{I}\left\{\textstyle\frac{\log\left(\bar{\mu}_i(t-1)\right)}{c_i} \geq \frac{\log\left(\bar{\mu}_{I^*}(t-1)\right)}{c_{I^*}},T_i(t-1)\geq \ell\right\}\notag\\
            &\leq \ell + \sum_{t=N+1}^{\tau(B)}\mathbb{I}\left\{\textstyle \bigcup\limits_{s_{I^*}=1}^{t-1}\bigcup\limits_{s_i=\ell}^{t-1}\left\{\frac{\log\left(\bar{\mu}_{i,s_i}\right)}{c_i} \geq \frac{\log\left(\bar{\mu}_{I^*,s_{I^*}}\right)}{c_{I^*}}\right\}\right\}\label{eq29} \\
            &\leq \ell + \sum_{t=N+1}^{\tau(B)}\sum_{s_{I^*}=1}^{t-1}\sum_{s_i=\ell}^{t-1}\mathbb{I}\left\{\textstyle \frac{\log\left(\bar{\mu}_{i,s_i}\right)}{c_i} \geq \frac{\log\left(\bar{\mu}_{I^*,s_{I^*}}\right)}{c_{I^*}}\right\}\label{eq30}\\
            &\leq \ell + \sum_{t=N+1}^{\tau(B)}\sum_{s_{I^*}=1}^{t-1}\sum_{s_i=\ell}^{t-1}\left(\mathbb{I}\{\Gamma\} + \mathbb{I}\{\Lambda\} + \mathbb{I}\{\Omega\}\right).\label{eq31}
        \end{flalign}
        Here, the Ineqn. (\ref{eq29}) is because of the implication relationship between events, the Ineqn. (\ref{eq30}) can be obtained by taking the union bound, and the Ineqn. (\ref{eq31}) is the direct induction from Lemma \ref{1to3}.
        
        Next, letting $\ell=\frac{8\log(t)}{Mc_{i}^2\Delta_{i}^2}$, we upper bound $\mathbb{E}\left[T_i(\tau(B))\vert \tau(B)\right]=$      
        \begin{align}
            &\leq\ell +  \mathbb{E}\left[\sum_{t=N+1}^{\tau(B)}\sum_{s_{I^*}=1}^{t-1}\sum_{s_i=\ell}^{t-1}\left(\mathbb{I}\{\Gamma\} + \mathbb{I}\{\Lambda\} + \mathbb{I}\{\Omega\}\right)\right] \notag \\
            &= \ell + \sum_{t=N+1}^{\tau(B)}\sum_{s_{I^*}=1}^{t-1}\sum_{s_i=\ell}^{t-1}\left(\mathbb{P}(\Gamma) + \mathbb{P}(\Lambda) + 0\right) \label{p=0}\\
            &=\ell + \sum_{t=N+1}^{\tau(B)}\sum_{s_{I^*}=1}^{t-1}\sum_{s_i=\ell}^{t-1}\left[\mathbb{P}\left(\frac{\log\left(\hat{u}_{I^*,s_{I^*}}+\varepsilon_{I^*}(t)\right)}{c_{I^*}}\right. \right.\notag\\
            &\qquad\left.\left.\leq \frac{\log\left(U_{I^*}\right)}{c_{I^*}}\right)+\mathbb{P}\left(\frac{\log\left(U_i+\varepsilon_{i}(t)\right)}{c_i} \leq \frac{\log\left(\hat{u}_{i,s_i}\right)}{c_i}\right)\right] \notag\\   
            &=\ell + \sum_{t=N+1}^{\tau(B)}\sum_{s=1}^{t-1}\sum_{s_i=\ell}^{t-1}\left(\mathbb{P}\left(\hat{u}_{I^*,s_{I^*}}+\varepsilon_{I^*}(t) \leq U_{I^*}\right)\right. \notag\\
            &\qquad + \left.\mathbb{P}\left(U_i+\varepsilon_{i}(t) \leq \hat{u}_{i,s_i}\right)\right) \notag\\ 
            &\leq \ell + \sum_{t=N+1}^{\tau(B)}\sum_{s_{I^*}=1}^{t-1}\sum_{s_i=\ell}^{t-1}\left(t^{-4} + t^{-4}\right)\label{eq34}\\            
            &\leq \left\lceil \frac{8\log(\tau(B))}{Mc_i^2\Delta_i^2} \right\rceil + \sum_{t=N+1}^{\tau(B)}\sum_{s_{I^*}=1}^{t}\sum_{s_i=1}^{t}2t^{-4} \notag\\
            &\leq \frac{8\log(\tau(B))}{Mc_i^2\Delta_i^2} + 1 + \sum_{t=1}^{\infty}(2t^{-2}) \notag \\
            &\leq \frac{8\log(\tau(B))}{Mc_i^2\Delta_i^2} + 1 + \frac{\pi^{2}}{3}\notag
        \end{align}
        In Eqn. (\ref{p=0}), we obtain $\mathbb{P}(\Omega)=0$ from the proof of Lemma \ref{1to3}. In Eqn. (\ref{eq34}), it is derived from Lemma \ref{conf_bound}.
    \end{proof}

    \begin{lem} \label{con-e}
        Consider two discrete random variables $X$ and $Y$ with joint probability mass function $P(x,y)$. If $f(x,y)$ is a function of $x$ and $y$, the expectation of $f(x,y)$ can be represented as:
        \begin{equation}
            \mathbb{E}_{X,Y}\left[f\right] = \mathbb{E}_{X}\mathbb{E}_{Y\vert X}\left[f\right], \notag
        \end{equation}
        where $\mathbb{E}_{Y\vert X}$ indicates the conditional expectation given $X$.
    \end{lem}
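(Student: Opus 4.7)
The plan is to prove Lemma \ref{con-e} directly from the definition of expectation for discrete random variables, invoking only the factorization $P(x,y) = P(y\mid x)\, P(x)$ of the joint probability mass function. This is the standard law of total expectation (tower property), so no clever trick is needed---the work is almost entirely bookkeeping of summation order.

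First I would expand the left-hand side as
\begin{equation*}
\mathbb{E}_{X,Y}[f] = \sum_{x}\sum_{y} f(x,y)\, P(x,y),
\end{equation*}
substitute $P(x,y) = P(y\mid x)\, P(x)$ (with the usual convention that terms with $P(x)=0$ vanish), and factor $P(x)$ out of the inner summation to obtain
\begin{equation*}
\sum_{x} P(x) \sum_{y} f(x,y)\, P(y\mid x).
\end{equation*}
Then I would identify the inner sum as $\mathbb{E}_{Y\mid X=x}[f]$ and the outer sum as $\mathbb{E}_{X}\bigl[\mathbb{E}_{Y\mid X}[f]\bigr]$, which matches the right-hand side of the claim.

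The only subtlety is justifying the interchange in the order of summation, which requires a Fubini-type absolute convergence condition. This is not an obstacle in the present setting: in the intended application, $f$ corresponds to a per-round log-utility that is uniformly bounded (since $u_{ij}(t)\in[u_{min},u_{max}]$ with $u_{min}\geq 1$), and the outer index $\tau(B)$ is almost surely finite under the budget termination rule $B_t \geq M c_{min}$. Hence absolute convergence holds term-wise and the rearrangement is legitimate. I expect no genuine difficulty; the lemma is stated chiefly as a notational convenience for decoupling the expectation over the random stopping time $\tau(B)$ from the expectation over the action sequence $\{A(t)\}$ when later combining Theorem \ref{Ti_bound} with a bound on $\mathbb{E}[\tau(B)]$.
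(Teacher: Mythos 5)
Your proposal is correct and follows essentially the same route as the paper's own proof: expand the double sum, factor $P(x,y)=P(x)P(y\vert x)$, pull $P(x)$ out of the inner sum, and identify the two sums as the nested expectations. The extra remark on absolute convergence is a harmless refinement the paper omits.
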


    \begin{proof}
        We use a property of conditional probability to transform the summation for proof:
        \begin{align}
            \mathbb{E}_{X,Y}\left[f\right]&=\sum_{x}\sum_{y} P(x,y)f(x,y) \notag\\
            &=\sum_{x}\sum_{y} P(x)P(y\vert x)f(x,y) \notag\\
            &=\sum_{x}P(x) \sum_{y}P(y\vert x)f(x,y) \notag\\
            &=\mathbb{E}_{X}\mathbb{E}_{Y\vert X}\left[f\right]. \notag
        \end{align}
        Thus, this lemma has been proven.
    \end{proof}

    \begin{thm}\label{tau_bound}
        Under the constrained budget $B$, the expectation of the total number of rounds $\tau(B)$ can be lower bounded by $\mathbb{E}\left[\tau(B)\right] \geq$
        \begin{equation}
            \frac{B}{Mc_{I^*}} - \frac{c_{min}}{c_{I^*}}
            -\sum_{\delta_i>0}\frac{\delta_i}{c_{I^*}}\left(\frac{8\log\left(\frac{B}{Mc_{min}}\right)}{Mc_i^2\Delta_i^2} + 1 + \frac{\pi^{2}}{3}\right).\notag
        \end{equation}
    \end{thm}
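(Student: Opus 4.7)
The plan is to convert the budget constraint into a pointwise bound on $\tau(B)$ expressed in terms of the per-arm counts $T_i(\tau(B))$, and then plug in the per-arm bound from Theorem \ref{Ti_bound}. First I would use the stopping rule of the while loop in Algorithm \ref{BUCB}: at termination the remaining budget is strictly less than $Mc_{min}$, so $\sum_{i\in\mathcal{N}} T_i(\tau(B))\,Mc_i > B - Mc_{min}$. Using the identity $\tau(B) = \sum_{i} T_i(\tau(B))$ and the decomposition $c_i = c_{I^*} + \delta_i$, the spent budget splits as
\begin{equation*}
\sum_{i} T_i(\tau(B))\,Mc_i \;=\; Mc_{I^*}\tau(B) \;+\; M\sum_{i} \delta_i T_i(\tau(B)),
\end{equation*}
so combining the two gives $Mc_{I^*}\tau(B) \;>\; B - Mc_{min} - M\sum_{i}\delta_i T_i(\tau(B))$.

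Next I would divide by $Mc_{I^*}$ and drop the non-positive contributions from indices with $\delta_i<0$ (this only weakens the lower bound but keeps it valid), which yields the pointwise inequality
\begin{equation*}
\tau(B) \;\geq\; \frac{B}{Mc_{I^*}} - \frac{c_{min}}{c_{I^*}} - \sum_{\delta_i>0} \frac{\delta_i}{c_{I^*}} T_i(\tau(B)).
\end{equation*}
Taking expectations and invoking the tower rule from Lemma \ref{con-e} (with $\tau(B)$ in the outer expectation and $T_i(\tau(B))$ conditional on it), I obtain a bound for $\mathbb{E}[\tau(B)]$ in terms of $\mathbb{E}[T_i(\tau(B))]$ for each $i$ with $\delta_i>0$.

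The last step is to control $\mathbb{E}[T_i(\tau(B))]$. By Theorem \ref{Ti_bound}, conditional on $\tau(B)$,
\begin{equation*}
\mathbb{E}[T_i(\tau(B))\mid\tau(B)] \;\leq\; \frac{8\log(\tau(B))}{Mc_i^2\Delta_i^2} + 1 + \frac{\pi^2}{3}.
\end{equation*}
To discharge the $\log(\tau(B))$ term without invoking Jensen's inequality, I would use the deterministic upper bound $\tau(B)\leq B/(Mc_{min})$, which holds because every round costs at least $Mc_{min}$. Monotonicity of $\log(\cdot)$ then gives $\mathbb{E}[\log(\tau(B))]\leq \log(B/(Mc_{min}))$, and taking the unconditional expectation yields the claim.

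The main obstacle I anticipate is the small bookkeeping around the stopping condition and the sign of $\delta_i$: one must be careful that the strict inequality $B_{\tau(B)} < Mc_{min}$ from the algorithm's while loop (rather than the $B_{\tau(B)+1}\leq 0$ phrasing in Section~\ref{sec3}) is the right thing to use to extract $B - Mc_{min}$, and that discarding the $\delta_i<0$ terms preserves the inequality direction. Everything else is algebraic manipulation plus a single application of Theorem \ref{Ti_bound} and the deterministic bound $\tau(B)\leq B/(Mc_{min})$; no additional concentration argument is needed beyond what has already been established.
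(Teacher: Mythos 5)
Your proposal is correct and follows essentially the same route as the paper: both start from the termination condition $\sum_{t=1}^{\tau(B)}Mc_{A(t)}\geq B-Mc_{min}$, decompose the per-round cost as $c_{I^*}+\delta_i$, discard the $\delta_i<0$ terms, identify the remaining sums with $\mathbb{E}[T_i(\tau(B))\mid\tau(B)]$ via the tower rule of Lemma \ref{con-e}, and then apply Theorem \ref{Ti_bound} together with the deterministic bound $\tau(B)\leq B/(Mc_{min})$ to replace $\log(\tau(B))$ by $\log(B/(Mc_{min}))$. The only difference is bookkeeping: you phrase the accounting pointwise via $\tau(B)=\sum_i T_i(\tau(B))$ before taking expectations, whereas the paper takes expectations first and works with $\sum_t\mathbb{P}(A(t)=i)$, which amounts to the same argument.
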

    \begin{proof}
        According to our algorithm, after the final round, no MC $i$ can be taken with the remaining budget $B_{\tau(B)}=B-\sum_{t=1}^{\tau(B)}Mc_{A(t)}$. Thus, we have
        \begin{equation}
            B - M\cdot c_{min} \leq \sum_{t=1}^{\tau(B)}M\cdot c_{A(t)} \notag
        \end{equation}
        almost surely. Take the expectation on both sides of this inequality, then we adopt Lemma \ref{con-e} in order to tackle the expectation with respect to $i$ first under the condition of $\tau(B)$. Thus, we have
        \begin{align}
            &\frac{B}{M} - c_{min} \leq \mathbb{E}_{\tau(B),A(t)}\left[\sum_{t=1}^{\tau(B)}c_{A(t)}\right] \notag \\
            &=\mathbb{E}_{\tau(B)}\mathbb{E}_{\{A(t)\}\vert \tau(B)}\left[\sum_{t=1}^{\tau(B)}c_{A(t)}\right] \notag \\
            &=\mathbb{E}_{\tau(B)}\left[\sum_{t=1}^{\tau(B)}\mathbb{E}_{A(t)}\left[c_{A(t)}\right]\right] \notag \\
            &= \mathbb{E}_{\tau(B)}\left[\sum_{t=1}^{\tau(B)}\sum_{i=1}^{N}\mathbb{P}\left(A(t)=i\right)c_{i}\right] \notag \\
            &= \mathbb{E}_{\tau(B)}\left[\sum_{t=1}^{\tau(B)}\left(\left(\sum_{i=1}^{N}\mathbb{P}\left(A(t)=i\right)c_{i}\right) - c_{I^*} + c_{I^*}\right)\right] \notag \\
            &= \mathbb{E}_{\tau(B)}\left[\sum_{t=1}^{\tau(B)}\left(\left(\sum_{i=1}^{N}\mathbb{P}\left(A(t)=i\right)(c_{i}-c_{I^*})\right) + c_{I^*}\right)\right] \notag \\
            &= \mathbb{E}_{\tau(B)}\left[\sum_{t=1}^{\tau(B)}\left(c_{I^*} + \sum_{i=1}^{N}\delta_i\mathbb{P}\left(A(t)=i\right)\right)\right] \notag \\
            &= \mathbb{E}_{\tau(B)}\left[\sum_{t=1}^{\tau(B)}c_{I^*}\right] + \mathbb{E}_{\tau(B)}\left[\sum_{t=1}^{\tau(B)}\sum_{i=1}^{N}\delta_i\mathbb{P}\left(A(t)=i\right)\right] \notag \\
            &\leq\mathbb{E}_{\tau(B)}\left[\tau(B)\right]c_{I^*} + \mathbb{E}_{\tau(B)}\left[\sum_{t=1}^{\tau(B)}\sum_{\delta_i>0}\delta_i\mathbb{P}\left(A(t)=i\right)\right] \notag \\
            &=\mathbb{E}_{\tau(B)}\left[\tau(B)\right]c_{I^*} + \mathbb{E}_{\tau(B)}\left[\sum_{\delta_i>0}\delta_i\sum_{t=1}^{\tau(B)}\mathbb{P}\left(A(t)=i\right)\right]\notag\\
            &=\mathbb{E}_{\tau(B)}\left[\tau(B)\right]c_{I^*} + \mathbb{E}_{\tau(B)}\left[\sum_{\delta_i>0}\delta_i\cdot\mathbb{E}\left[T_i(\tau(B))\vert \tau(B)\right]\right]\notag\\
            &\leq\mathbb{E}_{\tau(B)}\left[\tau(B)\right]c_{I^*} + \mathbb{E}_{\tau(B)}\left[\sum_{\delta_i>0}\delta_i\left(\scriptstyle \frac{8\log(\tau(B))}{Mc_i^2\Delta_i^2} + 1 + \frac{\pi^{2}}{3}\right)\right]\notag\\
            &\leq\mathbb{E}_{\tau(B)}\left[\tau(B)\right]c_{I^*} + \mathbb{E}_{\tau(B)}\left[\sum_{\delta_i>0}\delta_i\left(\scriptstyle \frac{8\log\left(\frac{B}{Mc_{min}}\right)}{Mc_i^2\Delta_i^2} + 1 + \frac{\pi^{2}}{3}\right)\right]\notag\\
            &\leq\mathbb{E}_{\tau(B)}\left[\tau(B)\right]c_{I^*} + \sum_{\delta_i>0}\delta_i\left(\scriptstyle \frac{8\log\left(\frac{B}{Mc_{min}}\right)}{Mc_i^2\Delta_i^2} + 1 + \frac{\pi^{2}}{3}\right).\notag
        \end{align}
        By dividing $c_{I^*}$ on both sides of the inequality above as well as moving items, we can obtain the result:
        \begin{align}
            \mathbb{E}_{\tau(B)}\left[\tau(B)\right]&\geq\frac{B}{Mc_{I^*}} - \frac{c_{min}}{c_{I^*}} \notag\\
            &-\sum_{\delta_i>0}\frac{\delta_i}{c_{I^*}}\left(\frac{8\log\left(\frac{B}{Mc_{min}}\right)}{Mc_i^2\Delta_i^2} + 1 + \frac{\pi^{2}}{3}\right). \notag
        \end{align}
        Thus, this theorem has been proven.
    \end{proof}


\begin{figure*}[!t]
	\centering
    \includegraphics[width=1\linewidth]{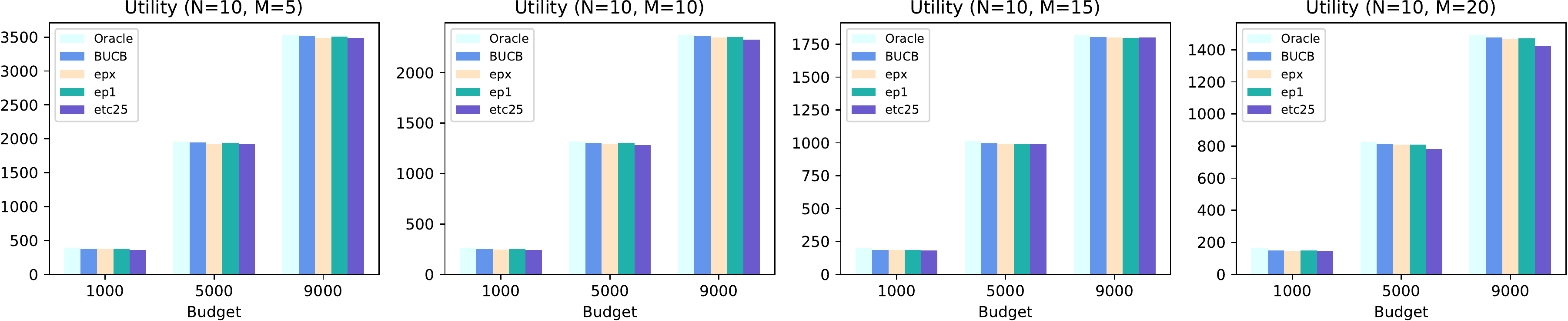}
	\caption{The cumulative utility versus budget $B$ with $N=10$.}
	\label{fig_reward_N10}
\end{figure*}
\begin{figure*}[!t]
	\centering
    \includegraphics[width=1\linewidth]{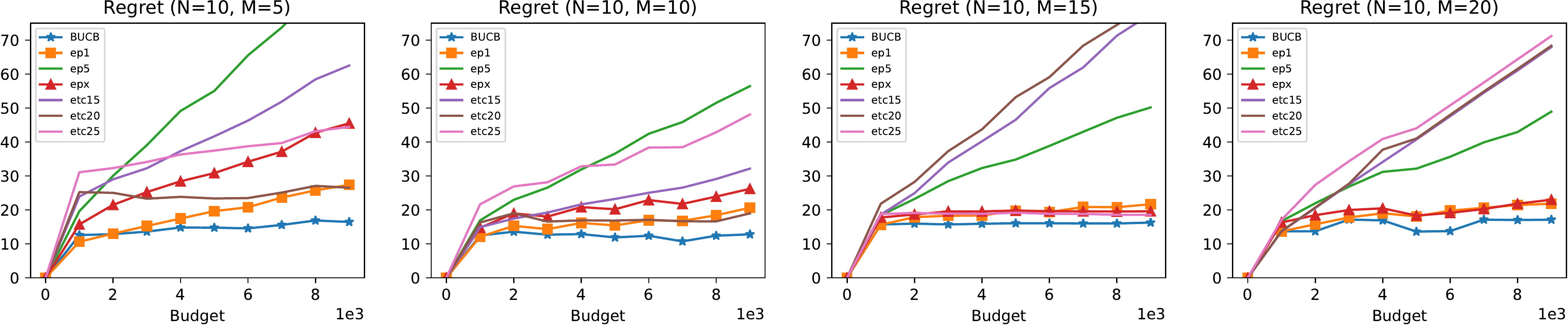}
	\caption{The regret versus budget $B$ with $N=10$.}
	\label{fig_regret_N10}
\end{figure*}
\begin{figure*}[!t]
	\centering
    \includegraphics[width=1\linewidth]{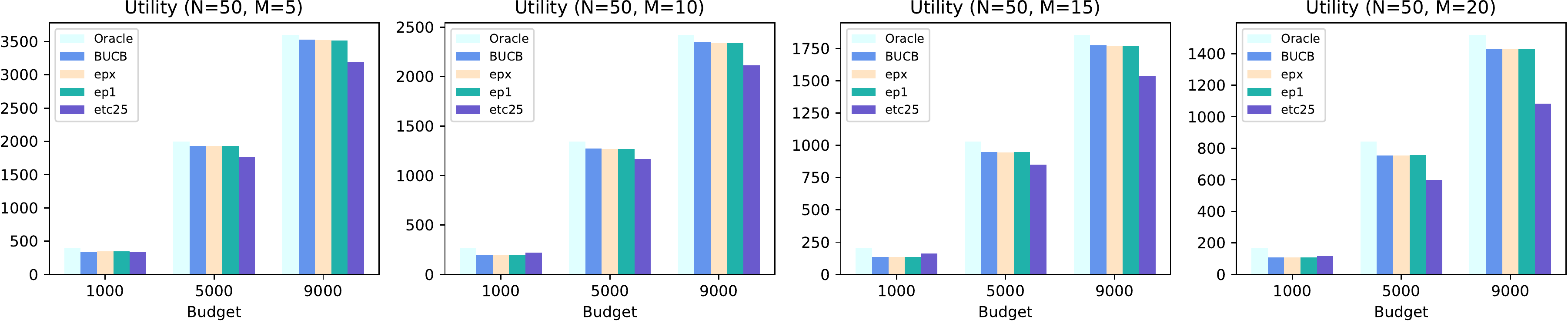}
	\caption{The cumulative utility versus budget $B$ with $N=50$.}
	\label{fig_reward_N50}
\end{figure*}
\begin{figure*}[!t]
	\centering
    \includegraphics[width=1\linewidth]{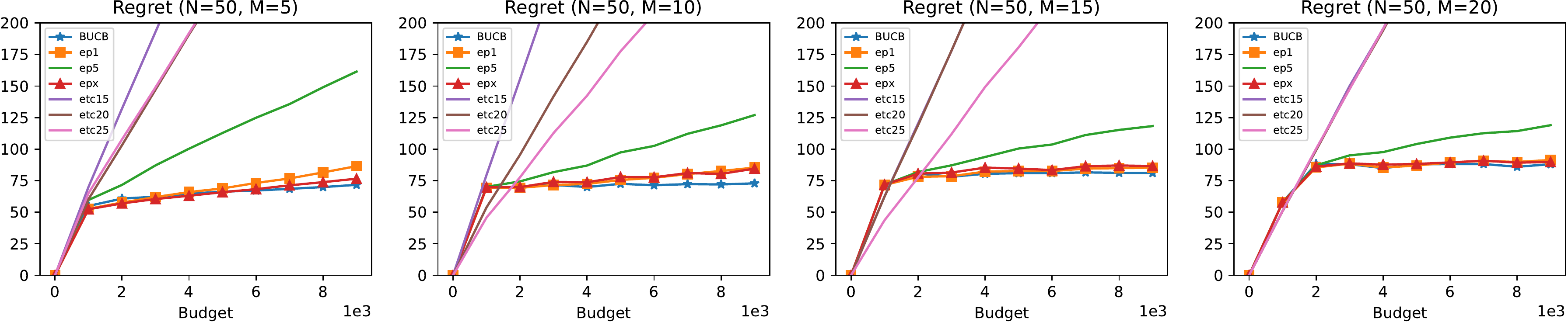}
	\caption{The regret versus budget $B$ with $N=50$.}
	\label{fig_regret_N50}
\end{figure*}

\begin{lem}[Jensen's inequality]\label{jensen}
    For a concave function $g(\cdot):\ \mathbb{D}\to \mathbb{R}$, where $\mathbb{D}$ is an interval in $\mathbb{R}$. If $x_1, x_1, \cdots, x_n\in \mathbb{D}$ and non-negative real number $q_i$ such that $q_1+q_2+\cdots+q_n=1$, we have \cite{jensen1906fonctions}:
    \begin{equation}
        q_1g(x_1)+q_2g(x_2)+\cdots+q_ng(x_n) \leq g(q_1x_1+q_2x_2+\cdots+q_nx_n). \notag
    \end{equation}
    Furthermore, if $X$ is a random variable, then:
    \begin{equation}
        \mathbb{E}[g(X)] \leq g\left(\mathbb{E}\left[X\right]\right).\notag
    \end{equation}
\end{lem}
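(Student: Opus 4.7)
The plan is to establish Lemma~\ref{jensen} in two stages, corresponding to the discrete inequality for finite convex combinations and its extension to arbitrary random variables.

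First, for the discrete statement, I would proceed by induction on $n$. The base case $n=2$ is precisely the defining property of a concave function on the interval $\mathbb{D}$: for any $x_1,x_2\in\mathbb{D}$ and $q_1,q_2\geq 0$ with $q_1+q_2=1$, we have $q_1 g(x_1)+q_2 g(x_2)\leq g(q_1 x_1+q_2 x_2)$. For the inductive step, assuming the claim for $n-1$, I would first dispatch the trivial case $q_n=1$ separately, and otherwise rewrite $\sum_{i=1}^{n}q_i g(x_i) = (1-q_n)\sum_{i=1}^{n-1}\frac{q_i}{1-q_n}g(x_i) + q_n g(x_n)$. Since the renormalized weights $q_i/(1-q_n)$ sum to one, the inductive hypothesis applies to the first sum, bounding it by $(1-q_n)\,g\!\left(\sum_{i=1}^{n-1}\frac{q_i}{1-q_n}x_i\right)$; a further application of the $n=2$ case then combines this with $q_n g(x_n)$ to yield $g\!\left(\sum_{i=1}^{n}q_i x_i\right)$. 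Convexity of $\mathbb{D}$ as an interval guarantees that every intermediate convex combination stays inside the domain of $g$.

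Second, for the random variable version, I would invoke the supporting line characterization of concavity. At $\mu=\mathbb{E}[X]$, concavity of $g$ on the interval $\mathbb{D}$ provides a constant $c$ with $g(x)\leq g(\mu)+c(x-\mu)$ for every $x\in\mathbb{D}$. Taking expectations of both sides and using linearity together with $\mathbb{E}[X-\mu]=0$ gives $\mathbb{E}[g(X)]\leq g(\mathbb{E}[X])$. For a discrete $X$ the conclusion is also immediate from the finite case by letting the $q_i$ be the probability masses and the $x_i$ be the support points.

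Since the result is classical, traceable to \cite{jensen1906fonctions}, there is no real conceptual obstacle and the proof is essentially bookkeeping. The only points that demand mild care are the boundary case $q_n=1$ in the induction and, for the random variable version, ensuring that the supporting line exists at $\mu$; the latter is standard for a concave function on an interval, with a one-sided treatment at endpoints if $\mu$ happens to lie on the boundary of $\mathbb{D}$.
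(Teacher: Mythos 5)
Your proof is correct, but note that the paper does not actually prove Lemma~\ref{jensen} at all: it states the result and simply cites the classical reference \cite{jensen1906fonctions}, then proceeds to use it in the regret bound. Your two-stage argument --- induction on $n$ with renormalized weights for the finite convex combination, and the supporting-line (subgradient) characterization of concavity at $\mu=\mathbb{E}[X]$ for the random-variable form --- is the standard self-contained proof and is sound. The inductive step correctly handles the degenerate case $q_n=1$, and the interval hypothesis on $\mathbb{D}$ does guarantee that all intermediate convex combinations remain in the domain. The only residual technicalities, which you already flag, are the existence of the supporting line when $\mu$ lies on the boundary of $\mathbb{D}$ (in which case $X=\mu$ almost surely and the inequality is an equality) and the implicit integrability of $g(X)$; neither affects the application in the paper, where $g=\log$ is evaluated on bounded QoE values in $[u_{min},u_{max}]\subseteq[1,2]$. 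In short, you supply a complete proof where the paper supplies only a citation; for the purposes of Theorem~\ref{regret_bound} either suffices, but your version makes the appendix self-contained.
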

Based on the above analysis, we can get the following main theorem, which indicates that the expected regret of our proposed online algorithm is sublinear with respect to the budget $B$. It is important to guarantee the efficiency of our algorithm. 

    \begin{thm}\label{regret_bound}
        The expected regret, denoted by $\mathbb{E}\left[Reg_B\left(A\right)\right]$, can be upper bounded by $\mathbb{E}\left[Reg_B\left(A\right)\right]\leq$
        \begin{align}
            &\log\left(MU_{I^*}\right)\left(1+\frac{c_{min}}{c_{I^*}} + \sum_{\delta_i>0}\frac{\delta_i}{c_{I^*}}\left(\frac{8\log\left(\frac{B}{Mc_{min}}\right)}{Mc_i^2\Delta_i^2}+ 1 \right.\right.\notag\\
            &\left.\left.+ \frac{\pi^{2}}{3}\right)\right) + \log\left(\frac{U_{I^*}}{u_{min}}\right)N\left(\frac{8\log\left(\frac{B}{Mc_{min}}\right)}{Mc_i^2\Delta_i^2} + 1 + \frac{\pi^{2}}{3}\right). \notag
        \end{align}
    \end{thm}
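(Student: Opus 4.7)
The plan is to derive an exact arm-by-arm expression for $\mathbb{E}[R_B(A)]$, apply Jensen's inequality (Lemma~\ref{jensen}) in a carefully directed way, and then substitute the two bounds already established in Theorems~\ref{Ti_bound} and~\ref{tau_bound}. From the relaxation Eqn.~(\ref{eq10}), which bounds $R_B(A^*)$ by $\sum_{t=1}^{K}\log\bigl(\sum_j u_{I^*j}(t)\bigr)$ with $K := \lfloor B/(Mc_{I^*})\rfloor + 1$, the i.i.d.\ nature of the per-round log-rewards together with Jensen's inequality gives $\mathbb{E}[R_B(A^*)] \leq K\log(MU_{I^*})$. For our algorithm I expand $R_B(A) = \sum_{i} \sum_{t:\,A(t)=i,\, t\leq \tau(B)} \log\bigl(\sum_{j=1}^M u_{ij}(t)\bigr)$, and since the event $\{A(t)=i,\ t\leq \tau(B)\}$ is measurable with respect to the rewards revealed strictly before round $t$, the fresh draws $\{u_{ij}(t)\}_{j=1}^M$ are independent of that indicator. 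Summing produces the key identity
\[
\mathbb{E}[R_B(A)] = \sum_{i=1}^N \mu_i\,\mathbb{E}[T_i(\tau(B))], \quad \mu_i := \mathbb{E}\bigl[\log\bigl(\textstyle\sum_{j=1}^M u_{ij}(t)\bigr)\bigr],
\]
where Jensen gives $\mu_i \leq \log(MU_i)$ and the range constraint $u_{ij}\geq u_{min}$ gives $\mu_i \geq \log(Mu_{min})$.

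Subtracting the two expressions, using $\sum_i \mathbb{E}[T_i(\tau(B))] = \mathbb{E}[\tau(B)]$, and observing that the a.s.\ budget constraint $T_{I^*}(\tau(B))\cdot Mc_{I^*}\leq B$ forces $T_{I^*}(\tau(B))\leq K$ (so the coefficient $K-\mathbb{E}[T_{I^*}(\tau(B))]$ of $\mu_{I^*}$ is nonnegative), I may substitute the upper bound $\mu_{I^*}\leq \log(MU_{I^*})$ into that coefficient; for each suboptimal $i$ the lower bound $\mu_i\geq \log(Mu_{min})$ provides the correct sign inside $-\sum_{i\neq I^*}\mathbb{E}[T_i(\tau(B))]\mu_i$. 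Writing $\mathbb{E}[T_{I^*}(\tau(B))] = \mathbb{E}[\tau(B)] - \sum_{i\neq I^*}\mathbb{E}[T_i(\tau(B))]$ and collecting like terms yields the clean intermediate bound
\[
\mathbb{E}[Reg_B(A)] \leq \bigl(K - \mathbb{E}[\tau(B)]\bigr)\log(MU_{I^*}) + \log\bigl(U_{I^*}/u_{min}\bigr)\sum_{i\neq I^*}\mathbb{E}[T_i(\tau(B))].
\]

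The endgame is to substitute the earlier two theorems. Using $\lfloor B/(Mc_{I^*})\rfloor + 1 - B/(Mc_{I^*}) \leq 1$ together with Theorem~\ref{tau_bound} bounds the prefactor $K-\mathbb{E}[\tau(B)]$ by $1 + c_{min}/c_{I^*} + \sum_{\delta_i>0}\tfrac{\delta_i}{c_{I^*}}\bigl(\tfrac{8\log(B/(Mc_{min}))}{Mc_i^2\Delta_i^2}+1+\tfrac{\pi^2}{3}\bigr)$; and Theorem~\ref{Ti_bound}, combined with the deterministic estimate $\tau(B)\leq B/(Mc_{min})$ that upgrades $\log\tau(B)$ to $\log(B/(Mc_{min}))$, controls each $\mathbb{E}[T_i(\tau(B))]$ by $\tfrac{8\log(B/(Mc_{min}))}{Mc_i^2\Delta_i^2}+1+\tfrac{\pi^2}{3}$; summing over at most $N$ suboptimal arms produces the stated bound. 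The most delicate step will be justifying the arm-by-arm identity $\mathbb{E}[R_B(A)] = \sum_i \mu_i\,\mathbb{E}[T_i(\tau(B))]$ in the presence of the random stopping time $\tau(B)$ --- which relies on the independence of the current-round rewards from the prior filtration --- together with the sign check $K \geq T_{I^*}(\tau(B))$ a.s.\ that legitimizes the directional use of Jensen's inequality.
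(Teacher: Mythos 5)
Your proof is correct and, at the decisive combining step, takes a genuinely different --- and in fact cleaner --- route than the paper. The paper lower-bounds $\mathbb{E}[R_B(A)]$ by writing $\log\bigl(\sum_j u_{A(t)j}\bigr)=\log M+\log\bigl(\frac{1}{M}\sum_j u_{A(t)j}\bigr)$ and applying Jensen per user to obtain $\frac{1}{M}\sum_j\log u_{A(t)j}$; it then adds and subtracts $\log(U_{I^*})\mathbb{E}[\tau(B)]$ and bounds every residual $\log(U_{I^*}/u_{ij}(t))$ by $\log(U_{I^*}/u_{min})$ \emph{uniformly over all arms, including $I^*$}, which forces it to invoke Theorem~\ref{Ti_bound} for $i=I^*$ --- where $\Delta_{I^*}=0$ renders that bound vacuous while $\mathbb{E}[T_{I^*}(\tau(B))]$ is actually linear in $B$. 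Your Wald-type identity $\mathbb{E}[R_B(A)]=\sum_i\mu_i\,\mathbb{E}[T_i(\tau(B))]$ (legitimate because $\mathbb{I}\{A(t)=i,\,t\leq\tau(B)\}$ is determined by pre-round-$t$ information, independent of the fresh draws) skips the per-user Jensen step and, more importantly, cancels the optimal arm's contribution exactly, so that only $\mathbb{E}[T_i]$ for $i\neq I^*$ survives --- precisely the quantities Theorem~\ref{Ti_bound} controls. The one step you must execute in the right order is on the oracle side: you cannot first replace $\mathbb{E}[R_B(A^*)]$ by $K\log(MU_{I^*})$ and then subtract $\mu_{I^*}\mathbb{E}[T_{I^*}]$, because the Jensen gap $\log(MU_{I^*})-\mu_{I^*}$ times $\mathbb{E}[T_{I^*}]$ is itself linear in $B$; you must keep $K\mu_{I^*}$, form the nonnegative coefficient $K-\mathbb{E}[T_{I^*}(\tau(B))]$ (justified by $T_{I^*}Mc_{I^*}\leq B$ a.s.), and only then upgrade $\mu_{I^*}$ to $\log(MU_{I^*})$ --- which is exactly what your ``substitute into that coefficient'' remark accomplishes. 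With that ordering, the identity $\mathbb{E}[T_{I^*}]=\mathbb{E}[\tau(B)]-\sum_{i\neq I^*}\mathbb{E}[T_i]$, the bound $\mu_i\geq\log(Mu_{min})$, and the substitutions of Theorems~\ref{Ti_bound} and~\ref{tau_bound} (using $\tau(B)\leq B/(Mc_{min})$ to replace $\log\tau(B)$) reproduce the stated bound; your argument both matches the theorem and quietly repairs the soft spot in the paper's own treatment of the $i=I^*$ term.
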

    \begin{proof}
        The proof of Theorem \ref{regret_bound} is based on Theorem \ref{Ti_bound}, Theorem \ref{tau_bound}, and Lemma \ref{jensen}. Due to the space constraint, it is relegated to Appendix \ref{regret_proof}.
    \end{proof}

\begin{figure*}[!t]
	\centering
    \includegraphics[width=1\linewidth]{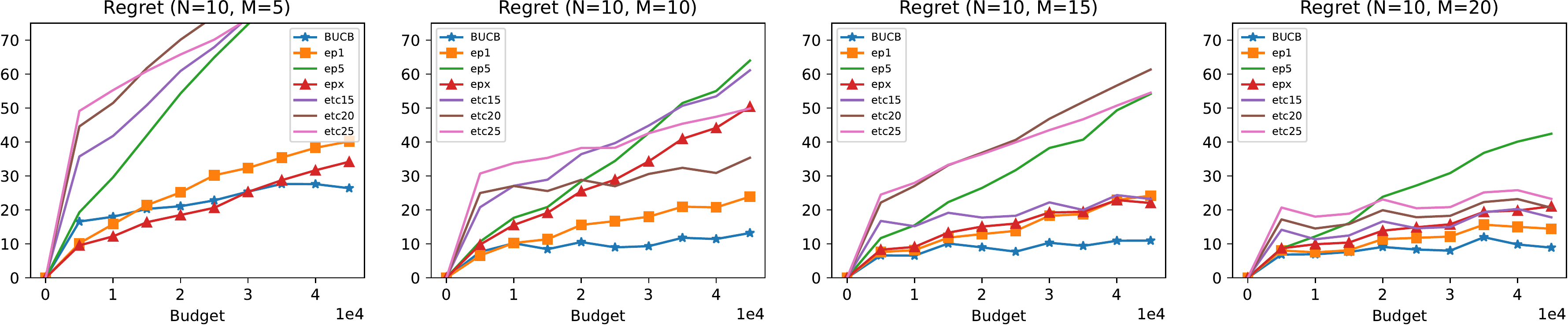}
	\caption{The regret versus budget $B$ with low variation in cost.}
	\label{fig_regret_N10_low}
\end{figure*}

\begin{figure*}[!t]
	\centering
    \includegraphics[width=1\linewidth]{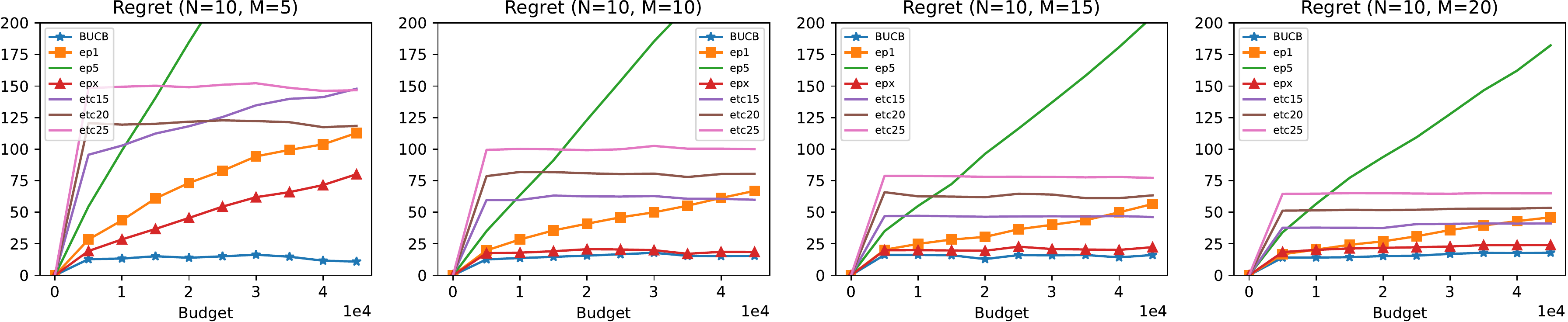}
	\caption{The regret versus budget $B$ with medium variation in cost.}
	\label{fig_regret_N10_medium}
\end{figure*}

\begin{figure*}[!t]
	\centering
    \includegraphics[width=1\linewidth]{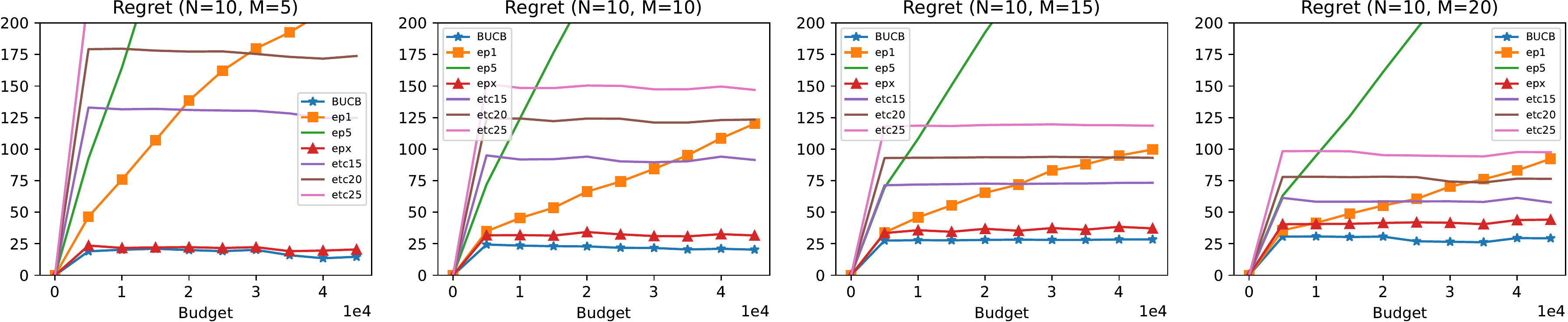}
	\caption{The regret versus budget $B$ with high variation in cost.}
	\label{fig_regret_N10_high}
\end{figure*}

    \begin{figure*}[t]
    	\centering
        \includegraphics[width=1\linewidth]{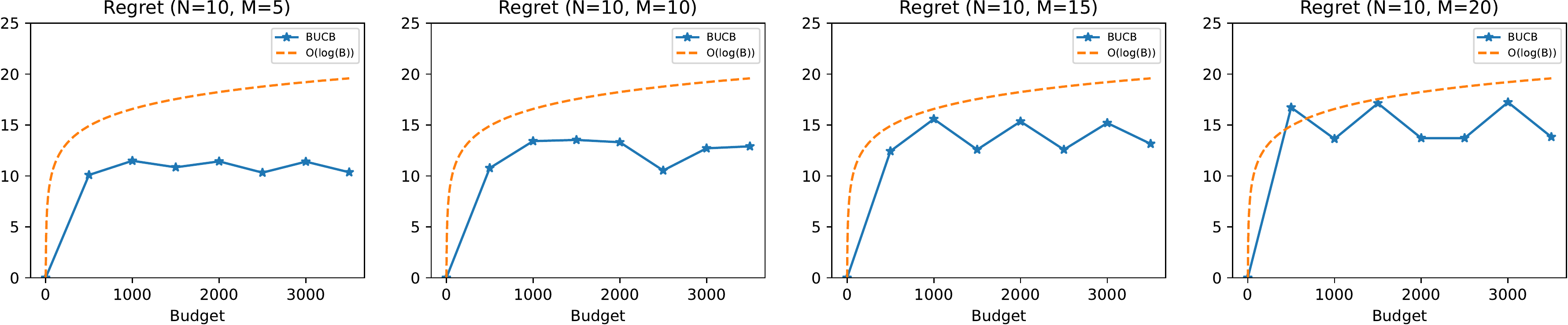}
    	\caption{The complexity of Budgeted-UCB regret versus budget $B$ in contrast to $\mathcal{O}(\log(B))$ with $N=10$.}
    	\label{fig_log_N10}
    \end{figure*}

    \begin{figure*}[t]
    	\centering
        \includegraphics[width=1\linewidth]{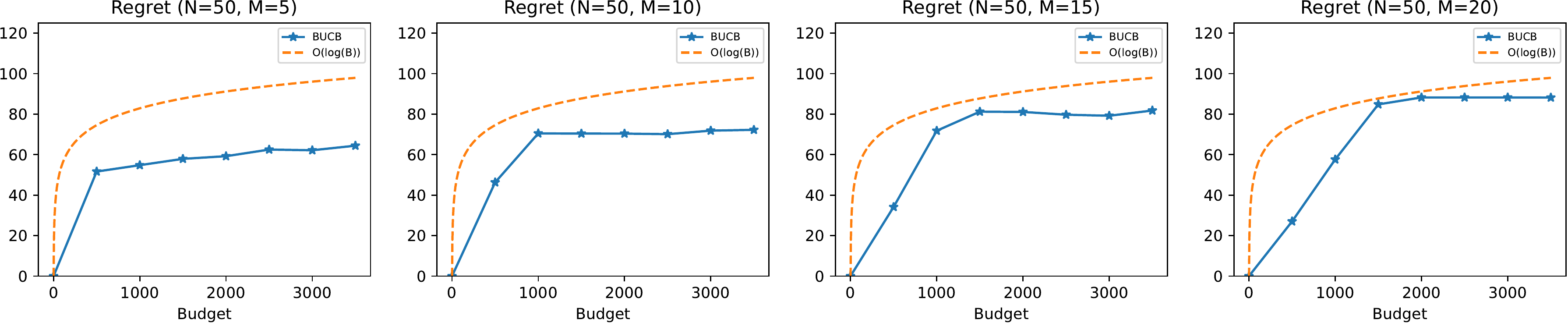}
    	\caption{The complexity of Budgeted-UCB regret versus budget $B$ in contrast to $\mathcal{O}(\log(B))$ with $N=10$.}
    	\label{fig_log_N50}
    \end{figure*}

\section{Performance Evaluation}\label{sec6}
    In this section, we evaluate the performance of the proposed algorithm by comparing it with other commonly used baselines. All of our simulations are programmed by Python 3.9 and run on a Windows 10 platform with CPU of Intel Core i7-11700 and 32GB RAM. The source code of our implementation can be shown in the following link: \url{https://github.com/Chubbro/BUCB}.

\subsection{Experimental Setup}
    The distribution behind each MC is simulated by a truncated Gaussian with the expected QoE $U_i$, which truncates an original Gaussian distribution to the range $[1,2]$. 
    At the same time, the unit cost $c_i$ of each MC $i\in\mathcal{N}$ is taken from a uniform distribution from the interval $[1,2]$. We first demonstrate the relative relation and the change of the cumulative utility and algorithmic regrets with the budget for the proposed and the benchmark algorithms, as $M$ varies from 5 to 20 and $N$ is set to 10 and 50. Next, for mimicking the real-world situations of low, medium, and high variance in costs, we take three groups of experiments with different intervals of the costs as $[1.5,2]$, $[1,2]$, and $[1,3]$, respectively. Within each group, the costs of the MCs are all sampled from the uniform distribution with one of the three intervals. For convenience, we label the three groups as \textit{low}, \textit{mediun}, and \textit{high} variations in cost in the remaining of this paper. In addition, we illustrate the logarithmic complexity of the proposed algorithm in terms of the budget. As aforementioned, the users served by the same MC receive QoE from a common distribution related to this MC, hence the feedback of QoE for the users can be sampled from the same truncated Gaussian distribution.

\subsection{Benchmarks}
    In the experiments, to demonstrate the efficacy, our proposed algorithm is compared with the following benchmarks:
    \begin{itemize}
        \item \textit{Oracle}: the Oracle algorithm has the prior knowledge of real expected QoE for each MC. In every round, the oracle selects the MC with the greatest expected utility-cost ratio.
        \item \textit{Budgeted-UCB (BUCB)}: The proposed algorithm shown as Algorithm \ref{BUCB} in this paper.
        \item \textit{ep1}, \textit{ep5}, \textit{epx}: These three algorithms are of classic $\varepsilon$-greedy strategy which selects the empirically best MC with probability $1-\varepsilon$ or explores a random MC with probability $\varepsilon$, with an only difference in $\varepsilon$ as 0.01, 0.05 and $1/t$ respectively.
        \item \textit{etc15}, \textit{etc20}, \textit{etc25}: These are variants of the explore-then-commit (ETC) algorithm, also known as epsilon-first strategy, with budget constraints and different proportions ($15\%$, $20\%$ and $25\%$) of the total budget for exploration. The algorithm would not conduct the exploitation phase until a certain proportion of the budget is exhausted through exploring all the MCs successively and repeatedly. Then based on the data collected in the exploration phase, the algorithm always selects the empirically best MC during the exploitation phase.
    \end{itemize}

\subsection{Result Analysis}
    We investigate the variation of utility and regret against budgets under different combinations of the number of candidate MCs $N$ and the number of connected users $M$, as shown from Fig. \ref{fig_reward_N10} to Fig. \ref{fig_regret_N50}, respectively. For reward comparison in Fig. \ref{fig_reward_N10} and Fig. \ref{fig_reward_N50}, on account of space limitations, the distinctly inferior algorithms compared with others from the results, including \textit{ep5}, \textit{etc20}, and \textit{etc25}, are omitted. It can be seen that the proposed \textit{BUCB} outperforms all the benchmarks in all the cases. However, as $M$ increases, the differences in the performance of algorithms are becoming minor referring to Fig. \ref{fig_regret_N10} and Fig. \ref{fig_regret_N50}, which is due to the increasingly accurate estimation brought by a larger sampling number $M$. Thus \textit{BUCB} has a more prominent advantage over others when $M=5$ and $M=10$ in contrast to the cases when $M=15,20$. From the results, the $\varepsilon$-greedy based algorithms are superior to the ETC-based algorithms on a whole. Also, for $N=10$, \textit{BUCB} has clearly better regrets than \textit{ep1} and \textit{epx}, while this ascendancy attenuates as $N$ increases to 50 despite the scaling. This is because the augmented number of arms would bring an extra amount of suboptimal exploration, which leads to the relatively small advantage of \textit{BUCB} under the same budget compared with the situations of fewer arms. 
    
    Furthermore, we model the real cases of different variations in cost as groups of low, medium, and high variations, and compare the regret of the proposed algorithm with other benchmarks in each group, as shown in Fig. \ref{fig_regret_N10_low}, Fig. \ref{fig_regret_N10_medium}, and Fig. \ref{fig_regret_N10_high}. Here we extend the budget bound up to 50000 for revealing clearer trends. In all three groups, \textit{BUCB} still shows better performance over the $\varepsilon$-greedy based algorithms and the ETC-based algorithms. It is interesting that the subtle differences in costs of the low-variation group contribute to the similarities in the utility-cost ratios, leading to slow convergence as depicted in Fig. \ref{fig_regret_N10_low} in contrast to the relatively rapid convergence in Fig. \ref{fig_regret_N10_medium} and Fig. \ref{fig_regret_N10_high}. For example, in Fig. \ref{fig_regret_N10_low}, we can see all the algorithms end with growing regrets, while in most cases of Fig. \ref{fig_regret_N10_medium} and Fig. \ref{fig_regret_N10_high}, \textit{BUCB}, \textit{epx}, and the ETC-based algorithms reach relative flat regret growth due to the relative sufficient exploration.


    To practically demonstrate the logarithmic complexity for the regret of \textit{BUCB}, we also compare the regret's growth rates between \textit{BUCB} and $\mathcal{O}(\log(B))$ function as shown in Fig. \ref{fig_log_N10} and Fig. \ref{fig_log_N50}, which corresponds to $N=10$ and $N=50$, respectively. The contrasted logarithmic functions in those figures are of the form $m\cdot\log(B)$ in which the coefficient $m$ is set no more than $12$. It is illustrated that the proposed algorithm practically achieves regrets with $\mathcal{O}(\log(B))$ complexity in regard to budget $B$.

\section{Conclusion}\label{sec7}
    This paper introduces Meta Computing as a novel computing paradigm that aims to integrate all the resources on a network as an MC. While technical limitations currently prevent the integration of resources across the entire network, we propose a meta computing system architecture, composed of multiple MCs in small-scale networks, which presents a viable solution. We highlight the importance of considering user experience in meta computing systems, particularly in scheduling MCs to maximize QoE with limited budgets. To address the resource scheduling problem, we propose a UCB-based algorithm, formulated from the MAB perspective, and model the utility of service using a concave function of total QoE to capture the ubiquitous law of diminishing marginal utility. Theoretical analysis demonstrates an upper bound on the regret of the proposed algorithm with logarithmic growth to the budget. Extensive experiments validate the correctness and effectiveness of the algorithm. 
    
    The research presented in this paper contributes to the advancement of Meta Computing as a promising computing paradigm and provides insights into addressing the challenges of resource scheduling. Future work could further explore other algorithms or techniques, taking into consideration practical constraints and real-world scenarios, to further advance the field of Meta Computing and its wide application.

\section*{Acknowledgment}

This work was supported in part by the National Key R\&D Program of China under Grant No. 2022YFE0201400, the National Natural Science Foundation of China (NSFC) under Grant No. 62202055, the Start-up Fund from Beijing Normal University under Grant No. 310432104, the Start-up Fund from BNU-HKBU United International College under Grant No. UICR0700018-22, and the Project of Young Innovative Talents of Guangdong Education Department under Grant No. 2022KQNCX102.

\ifCLASSOPTIONcaptionsoff
  \newpage
\fi

\bibliographystyle{IEEEtran}
\bibliography{references}

\begin{thebibliography}{10}
\providecommand{\url}[1]{#1}
\csname url@samestyle\endcsname
\providecommand{\newblock}{\relax}
\providecommand{\bibinfo}[2]{#2}
\providecommand{\BIBentrySTDinterwordspacing}{\spaceskip=0pt\relax}
\providecommand{\BIBentryALTinterwordstretchfactor}{4}
\providecommand{\BIBentryALTinterwordspacing}{\spaceskip=\fontdimen2\font plus
\BIBentryALTinterwordstretchfactor\fontdimen3\font minus
  \fontdimen4\font\relax}
\providecommand{\BIBforeignlanguage}[2]{{%
\expandafter\ifx\csname l@#1\endcsname\relax
\typeout{** WARNING: IEEEtran.bst: No hyphenation pattern has been}%
\typeout{** loaded for the language `#1'. Using the pattern for}%
\typeout{** the default language instead.}%
\else
\language=\csname l@#1\endcsname
\fi
#2}}
\providecommand{\BIBdecl}{\relax}
\BIBdecl

\bibitem{josep2010view}
A.~D. JoSEP, R.~KAtz, A.~KonWinSKi, L.~Gunho, D.~PAttERSon, and A.~RABKin, ``A
  view of cloud computing,'' \emph{Communications of the ACM}, vol.~53, no.~4,
  pp. 50--58, 2010.

\bibitem{armbrust2010view}
M.~Armbrust, A.~Fox, R.~Griffith, A.~D. Joseph, R.~Katz, A.~Konwinski, G.~Lee,
  D.~Patterson, A.~Rabkin, I.~Stoica \emph{et~al.}, ``A view of cloud
  computing,'' \emph{Communications of the ACM}, vol.~53, no.~4, pp. 50--58,
  2010.

\bibitem{atzori2010internet}
L.~Atzori, A.~Iera, and G.~Morabito, ``The internet of things: A survey,''
  \emph{Computer networks}, vol.~54, no.~15, pp. 2787--2805, 2010.

\bibitem{weber2010internet}
R.~H. Weber and R.~Weber, \emph{Internet of things}.\hskip 1em plus 0.5em minus
  0.4em\relax Springer, 2010, vol.~12.

\bibitem{shi2016edge}
W.~Shi, J.~Cao, Q.~Zhang, Y.~Li, and L.~Xu, ``Edge computing: Vision and
  challenges,'' \emph{IEEE internet of things journal}, vol.~3, no.~5, pp.
  637--646, 2016.

\bibitem{premsankar2018edge}
G.~Premsankar, M.~Di~Francesco, and T.~Taleb, ``Edge computing for the internet
  of things: A case study,'' \emph{IEEE Internet of Things Journal}, vol.~5,
  no.~2, pp. 1275--1284, 2018.

\bibitem{cheng2023meta}
X.~Cheng, M.~Xu, R.~Pan, D.~Yu, C.~Wang, X.~Xiao, and W.~Lyu, ``Meta
  computing,'' \emph{arXiv preprint arXiv:2302.09501}, 2023.

\bibitem{fizza2021qoe}
K.~Fizza, A.~Banerjee, K.~Mitra, P.~P. Jayaraman, R.~Ranjan, P.~Patel, and
  D.~Georgakopoulos, ``Qoe in iot: a vision, survey and future directions,''
  \emph{Discover Internet of Things}, vol.~1, pp. 1--14, 2021.

\bibitem{mahmud2019quality}
R.~Mahmud, S.~N. Srirama, K.~Ramamohanarao, and R.~Buyya, ``Quality of
  experience (qoe)-aware placement of applications in fog computing
  environments,'' \emph{Journal of Parallel and Distributed Computing}, vol.
  132, pp. 190--203, 2019.

\bibitem{saadi2020iot}
M.~Saadi, M.~T. Noor, A.~Imran, W.~T. Toor, S.~Mumtaz, and L.~Wuttisittikulkij,
  ``Iot enabled quality of experience measurement for next generation networks
  in smart cities,'' \emph{Sustainable Cities and Society}, vol.~60, p. 102266,
  2020.

\bibitem{sodhro2019quality}
A.~H. Sodhro, M.~S. Obaidat, Q.~H. Abbasi, P.~Pace, S.~Pirbhulal, G.~Fortino,
  M.~A. Imran, M.~Qaraqe \emph{et~al.}, ``Quality of service optimization in an
  iot-driven intelligent transportation system,'' \emph{IEEE Wireless
  Communications}, vol.~26, no.~6, pp. 10--17, 2019.

\bibitem{abusafia2022quality}
A.~Abusafia, A.~Bouguettaya, and A.~Lakhdari, ``Quality of experience
  optimization in iot energy services,'' in \emph{2022 IEEE International
  Conference on Web Services (ICWS)}.\hskip 1em plus 0.5em minus 0.4em\relax
  IEEE, 2022, pp. 91--96.

\bibitem{laghari2012toward}
K.~U.~R. Laghari and K.~Connelly, ``Toward total quality of experience: A qoe
  model in a communication ecosystem,'' \emph{IEEE Communications Magazine},
  vol.~50, no.~4, pp. 58--65, 2012.

\bibitem{aazam2019fog}
M.~Aazam, K.~A. Harras, and S.~Zeadally, ``Fog computing for 5g tactile
  industrial internet of things: Qoe-aware resource allocation model,''
  \emph{IEEE Transactions on Industrial Informatics}, vol.~15, no.~5, pp.
  3085--3092, 2019.

\bibitem{chen2014qos}
Y.~Chen, K.~Wu, and Q.~Zhang, ``From qos to qoe: A tutorial on video quality
  assessment,'' \emph{IEEE Communications Surveys \& Tutorials}, vol.~17,
  no.~2, pp. 1126--1165, 2014.

\bibitem{uthansakul2019estimating}
P.~Uthansakul, P.~Anchuen, M.~Uthansakul, and A.~A. Khan, ``Estimating and
  synthesizing qoe based on qos measurement for improving multimedia services
  on cellular networks using ann method,'' \emph{IEEE Transactions on Network
  and Service Management}, vol.~17, no.~1, pp. 389--402, 2019.

\bibitem{aazam2017cloud}
M.~Aazam and E.-N. Huh, ``Cloud broker service-oriented resource management
  model,'' \emph{Transactions on Emerging Telecommunications Technologies},
  vol.~28, no.~2, p. e2937, 2017.

\bibitem{tran2013qoe}
H.~A. Tran, S.~Hoceini, A.~Mellouk, J.~Perez, and S.~Zeadally, ``Qoe-based
  server selection for content distribution networks,'' \emph{IEEE Transactions
  on Computers}, vol.~63, no.~11, pp. 2803--2815, 2013.

\bibitem{buyya2019fog}
R.~Buyya and S.~N. Srirama, \emph{Fog and edge computing: principles and
  paradigms}.\hskip 1em plus 0.5em minus 0.4em\relax John Wiley \& Sons, 2019.

\bibitem{bao2017prediction}
Y.~Bao, H.~Wu, and X.~Liu, ``From prediction to action: Improving user
  experience with data-driven resource allocation,'' \emph{IEEE Journal on
  Selected Areas in Communications}, vol.~35, no.~5, pp. 1062--1075, 2017.

\bibitem{jiang2017pytheas}
J.~Jiang, S.~Sun, V.~Sekar, and H.~Zhang, ``Pytheas: Enabling data-driven
  quality of experience optimization using group-based
  exploration-exploitation.'' in \emph{NSDI}, vol.~1, 2017, p.~3.

\bibitem{boldrini2018mumab}
S.~Boldrini, L.~De~Nardis, G.~Caso, M.~T. Le, J.~Fiorina, and M.-G.
  Di~Benedetto, ``mumab: A multi-armed bandit model for wireless network
  selection,'' \emph{Algorithms}, vol.~11, no.~2, p.~13, 2018.

\bibitem{lu2019automating}
B.~Lu, J.~Yang, L.~Y. Chen, and S.~Ren, ``Automating deep neural network model
  selection for edge inference,'' in \emph{2019 IEEE First International
  Conference on Cognitive Machine Intelligence (CogMI)}.\hskip 1em plus 0.5em
  minus 0.4em\relax IEEE, 2019, pp. 184--193.

\bibitem{zhou2020human}
P.~Zhou, J.~Xu, W.~Wang, C.~Jiang, K.~Wang, and J.~Hu, ``Human-behavior and
  qoe-aware dynamic channel allocation for 5g networks: A latent contextual
  bandit learning approach,'' \emph{IEEE Transactions on Cognitive
  Communications and Networking}, vol.~6, no.~2, pp. 436--451, 2020.

\bibitem{zhu2021multi}
K.~Zhu, L.~Li, Y.~Xu, T.~Zhang, and L.~Zhou, ``Multi-connection based scalable
  video streaming in udns: A multi-agent multi-armed bandit approach,''
  \emph{IEEE Transactions on Wireless Communications}, vol.~21, no.~2, pp.
  1156--1169, 2021.

\bibitem{lu2022improving}
B.~Lu, J.~Yang, J.~Xu, and S.~Ren, ``Improving qoe of deep neural network
  inference on edge devices: A bandit approach,'' \emph{IEEE Internet of Things
  Journal}, vol.~9, no.~21, pp. 21\,409--21\,420, 2022.

\bibitem{xu2023resource}
Z.~Xu, Z.~Zhang, S.~Wang, Y.~Yan, and Q.~Cheng, ``Resource-constraint network
  selection for iot under the unknown and dynamic heterogeneous wireless
  environment,'' \emph{IEEE Internet of Things Journal}, 2023.

\bibitem{witt1965ibm}
B.~Witt and L.~Ward, ``Ibm operating system/360 concepts and facilities,''
  \emph{IBM Systems Reference Library}, pp. 1--91, 1965.

\bibitem{foster2003grid}
I.~Foster and C.~Kesselman, \emph{The Grid 2: Blueprint for a new computing
  infrastructure}.\hskip 1em plus 0.5em minus 0.4em\relax Elsevier, 2003.

\bibitem{chellappa1997intermediaries}
R.~Chellappa, ``Intermediaries in cloud-computing: A new computing paradigm,''
  in \emph{INFORMS Annual Meeting, Dallas}, 1997, pp. 26--29.

\bibitem{strategy2005internet}
I.~Strategy and P.~Unit, ``The internet of things,'' \emph{International
  Telecommunication Union}, 2005.

\bibitem{auer2002finite}
P.~Auer, N.~Cesa-Bianchi, and P.~Fischer, ``Finite-time analysis of the
  multiarmed bandit problem,'' \emph{Machine learning}, vol.~47, pp. 235--256,
  2002.

\bibitem{garivier2008upper}
A.~Garivier and E.~Moulines, ``On upper-confidence bound policies for
  non-stationary bandit problems,'' \emph{arXiv preprint arXiv:0805.3415},
  2008.

\bibitem{tran2012knapsack}
L.~Tran-Thanh, A.~Chapman, A.~Rogers, and N.~Jennings, ``Knapsack based optimal
  policies for budget--limited multi--armed bandits,'' in \emph{Proceedings of
  the AAAI Conference on Artificial Intelligence}, vol.~26, no.~1, 2012, pp.
  1134--1140.

\bibitem{ding2013multi}
W.~Ding, T.~Qin, X.-D. Zhang, and T.-Y. Liu, ``Multi-armed bandit with budget
  constraint and variable costs,'' in \emph{Proceedings of the AAAI Conference
  on Artificial Intelligence}, vol.~27, no.~1, 2013, pp. 232--238.

\bibitem{xia2015thompson}
Y.~Xia, H.~Li, T.~Qin, N.~Yu, and T.-Y. Liu, ``Thompson sampling for budgeted
  multi-armed bandits,'' \emph{arXiv preprint arXiv:1505.00146}, 2015.

\bibitem{xia2016budgeted}
Y.~Xia, T.~Qin, W.~Ma, N.~Yu, and T.-Y. Liu, ``Budgeted multi-armed bandits
  with multiple plays.'' in \emph{IJCAI}, vol.~6, 2016, pp. 2210--2216.

\bibitem{das2022budgeted}
D.~Das, S.~Jain, and S.~Gujar, ``Budgeted combinatorial multi-armed bandits,''
  \emph{arXiv preprint arXiv:2202.03704}, 2022.

\bibitem{vickrey1945measuring}
W.~Vickrey, ``Measuring marginal utility by reactions to risk,''
  \emph{Econometrica: Journal of the Econometric Society}, pp. 319--333, 1945.

\bibitem{viscusi1990utility}
W.~K. Viscusi and W.~N. Evans, ``Utility functions that depend on health
  status: estimates and economic implications,'' \emph{The American Economic
  Review}, pp. 353--374, 1990.

\bibitem{sahoo1998mean}
P.~Sahoo and T.~Riedel, \emph{Mean value theorems and functional
  equations}.\hskip 1em plus 0.5em minus 0.4em\relax World Scientific, 1998.

\bibitem{hoeffding1963probability}
W.~Hoeffding, ``Probability inequalities for sums of bounded random
  variables,'' \emph{Journal of the American statistical association}, vol.~58,
  no. 301, pp. 13--30, 1963.

\bibitem{boucheron2013concentration}
S.~Boucheron, G.~Lugosi, and P.~Massart, \emph{Concentration inequalities: A
  nonasymptotic theory of independence}.\hskip 1em plus 0.5em minus 0.4em\relax
  Oxford university press, 2013.

\bibitem{jensen1906fonctions}
J.~L. W.~V. Jensen, ``Sur les fonctions convexes et les in{\'e}galit{\'e}s
  entre les valeurs moyennes,'' \emph{Acta mathematica}, vol.~30, no.~1, pp.
  175--193, 1906.

\end{thebibliography}

\begin{IEEEbiography}[{\includegraphics[width=1in,height=1.25in,clip,keepaspectratio]{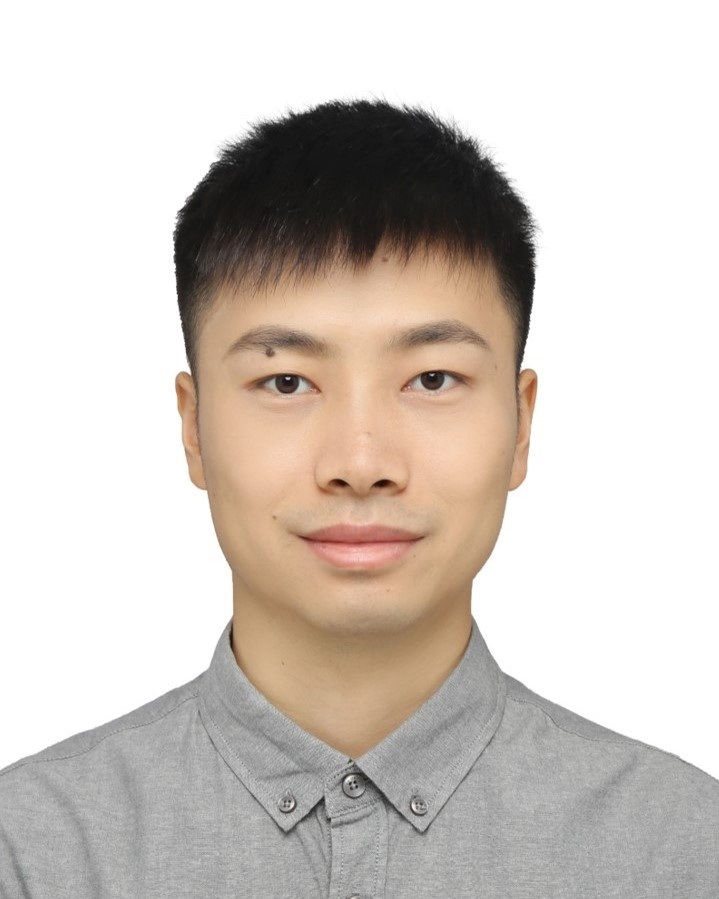}}]{Yandi Li} received his B.E. degree from the School of Optoelectronic Science and Engineering, University of Electronic Science and Technology of China, Chengdu, China, in 2013. He is currently pursuing the M.Phil. degree with the Guangdong Key Lab of AI and Multi-Modal Data Processing, Department of Computer Science, BNU-HKBU United International College, Zhuhai, China. He is supervised by Dr. Jianxiong Guo, and his research interests include social networks, online algorithms, and machine learning.
\end{IEEEbiography}

\begin{IEEEbiography}[{\includegraphics[width=1in,height=1.25in,clip,keepaspectratio]{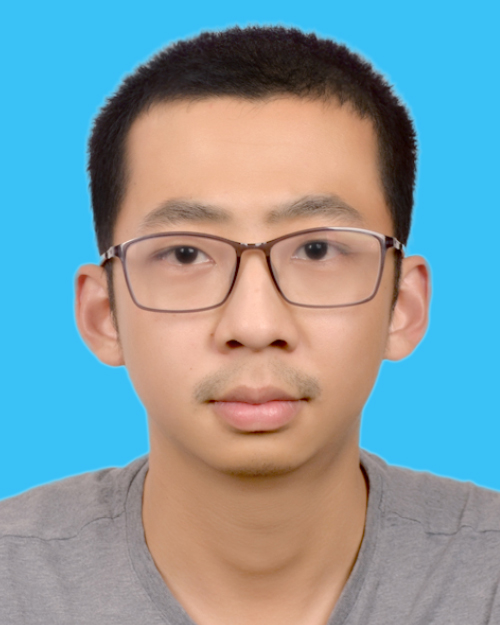}}]{Jianxiong Guo}
    (Member, IEEE) received his Ph.D. degree from the Department of Computer Science, University of Texas at Dallas, Richardson, TX, USA, in 2021, and his B.E. degree from the School of Chemistry and Chemical Engineering, South China University of Technology, Guangzhou, China, in 2015. He is currently an Assistant Professor with the Advanced Institute of Natural Sciences, Beijing Normal University, and also with the Guangdong Key Lab of AI and Multi-Modal Data Processing, BNU-HKBU United International College, Zhuhai, China. He is a member of IEEE/ACM/CCF. He has published more than 40 papers and been a reviewer in famous international journals/conferences. His research interests include social networks, algorithm design, data mining, IoT application, blockchain, and combinatorial optimization.
\end{IEEEbiography}

\begin{IEEEbiography}[{\includegraphics[width=1in,height=1.25in,clip,keepaspectratio]{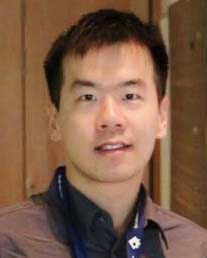}}]{Yupeng Li} (Member, IEEE) received the Ph.D. degree in computer science from The University of Hong Kong. He was with the University of Toronto and is currently with Hong Kong Baptist University. His research interests are in general areas of network science and, in particular, algorithmic decision making and machine learning problems, which arise in networked systems. 
He is also excited about interdisciplinary research that applies algorithmic techniques to edging problems. Recently, he has worked on robust online machine learning for the application of data classification, and he has extended these techniques to modern areas in networking and social media. Dr. Li has been awarded the Rising Star in Social Computing Award by CAAI and the distinction of Distinguished Member of the IEEE INFOCOM Technical Program Committee in 2022. He serves on the technical committees of some top conferences in computer science. His works have been published in prestigious venues, such as \textsc{IEEE INFOCOM}, \textsc{ACM MobiHoc}, \textsc{IEEE Journal on Selected Areas in Communications}, and \textsc{IEEE/ACM Transactions on Networking}. He is a member of ACM and IEEE.
\end{IEEEbiography}

\begin{IEEEbiography}[{\includegraphics[width=1in,height=1.25in,clip,keepaspectratio]{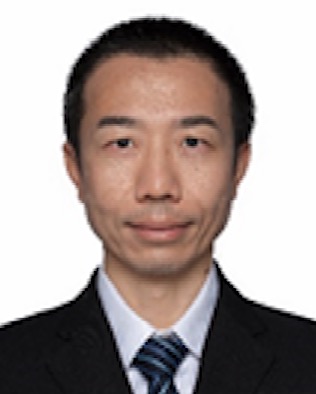}}]{Tian Wang}
	received his BSc and MSc degrees in Computer Science from the Central South University in 2004 and 2007, respectively. He received his PhD degree in City University of Hong Kong in Computer Science in 2011. Currently, he is a professor in the Institute of Artificial Intelligence and Future Networks, Beijing Normal University \& UIC. His research interests include internet of things, edge computing and mobile computing. He has 27 patents and has published more than 200 papers in high-level journals and conferences. He has more than 11000 citations, according to Google Scholar. His H-index is 53. He has managed 6 national natural science projects (including 2 sub-projects) and 4 provincial-level projects.
\end{IEEEbiography}

\begin{IEEEbiography}[{\includegraphics[width=1in,height=1.25in,clip,keepaspectratio]{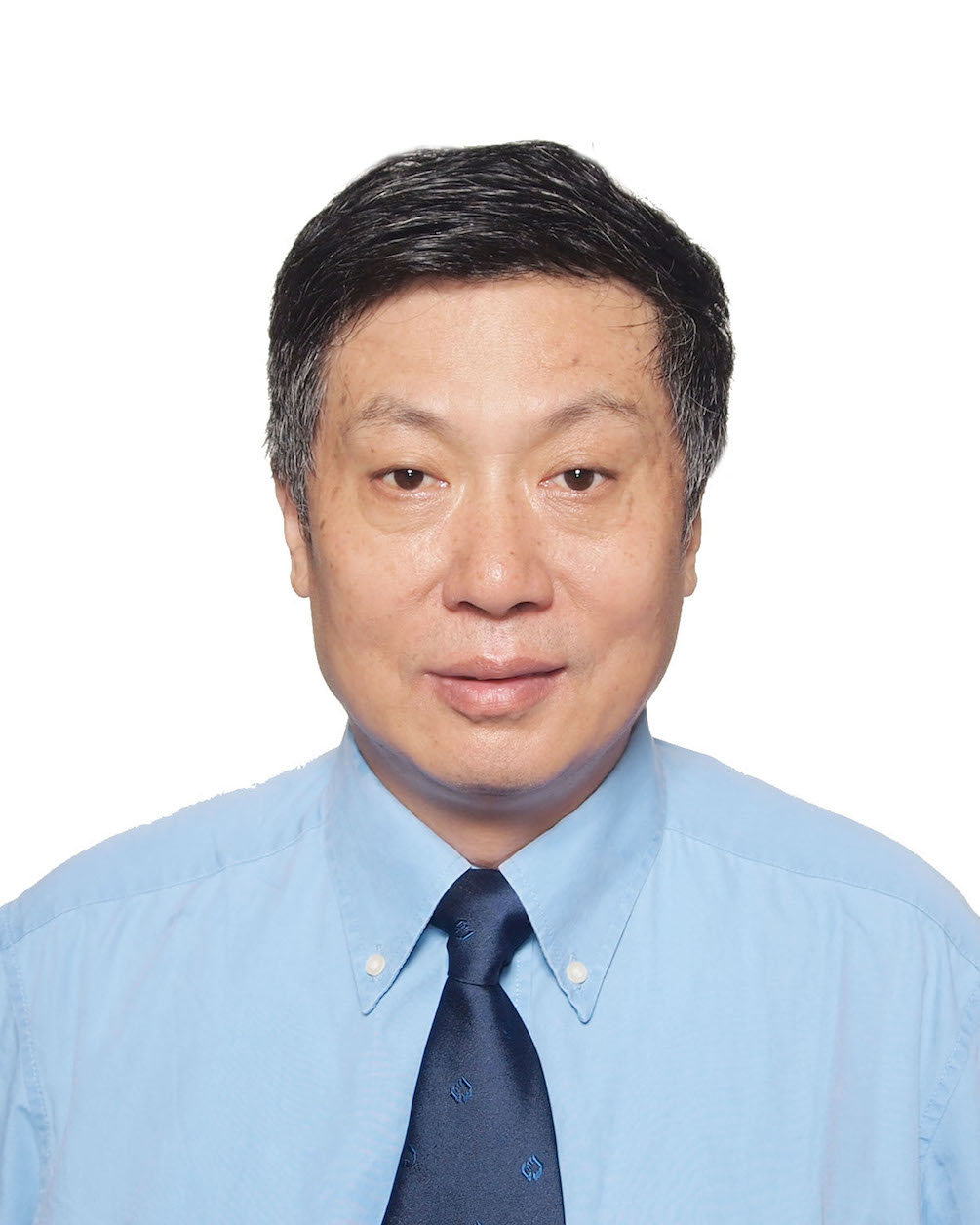}}]{Weijia Jia} (Fellow, IEEE)
    is currently a Chair Professor, Director of BNU-UIC Institute of Artificial Intelligence and Future Networks, Beijing Normal University (Zhuhai) and VP for Research of BNU-HKBU United International College (UIC) and has been the Zhiyuan Chair Professor of Shanghai Jiao Tong University, China. He was the Chair Professor and the Deputy Director of State Kay Laboratory of Internet of Things for Smart City at the University of Macau. He received BSc/MSc from Center South University, China in 82/84 and Master of Applied Sci./PhD from Polytechnic Faculty of Mons, Belgium in 92/93, respectively, all in computer science. From 93-95, he joined German National Research Center for Information Science (GMD) in Bonn (St. Augustine) as a research fellow. From 95-13, he worked in City University of Hong Kong as a professor. His contributions have been recognized as optimal network routing and deployment; anycast and QoS routing, sensors networking, AI (knowledge relation extractions; NLP, etc.), and edge computing. He has over 600 publications in the prestige international journals/conferences and research books and book chapters. He has received the best product awards from International Science \& Tech. Expo (Shenzhen) in 20112012 and the 1st Prize of Scientific Research Awards from the Ministry of Education of China in 2017 (list 2). He has served as area editor for various prestige international journals, chair, and PC member/keynote speaker for many top international conferences. He is the Fellow of IEEE and the Distinguished Member of CCF.
\end{IEEEbiography}

\onecolumn
\appendix
\setcounter{thm}{2}
\setcounter{lem}{0}
\renewcommand{\thelem}{A\arabic{lem}}
\renewcommand{\thesection}{\Alph{section}.\arabic{section}}
\setcounter{section}{0}

\subsection{Proof of Theorem \ref{regret_bound}} \label{regret_proof}
\begin{thm}
    The expected regret can be upper bounded as $\mathbb{E}\left[Reg_B\left(A\right)\right] \leq$
    \begin{equation}
        \log\left(MU_{I^*}\right)\left(1+\frac{c_{min}}{c_{I^*}} + \sum_{\delta_i>0}\frac{\delta_i}{c_{I^*}}\left(\frac{8\log\left(\frac{B}{Mc_{min}}\right)}{Mc_i^2\Delta_i^2} + 1 + \frac{\pi^{2}}{3}\right)\right) + \log\left(\frac{U_{I^*}}{u_min}\right)N\left(\frac{8\log\left(\frac{B}{Mc_{min}}\right)}{Mc_i^2\Delta_i^2} + 1 + \frac{\pi^{2}}{3}\right). \notag
    \end{equation}
\end{thm}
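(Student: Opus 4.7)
The plan is to combine the already-proved Theorems~\ref{Ti_bound} and~\ref{tau_bound} with Jensen's inequality (Lemma~\ref{jensen}), starting from the inequality
\begin{equation*}
\mathbb{E}[Reg_B(A)] \leq \mathbb{E}\!\left[\sum_{t=1}^{K^*}\log S_{I^*}(t)\right] - \mathbb{E}\!\left[\sum_{t=1}^{\tau(B)}\log S_{A(t)}(t)\right],
\end{equation*}
where I abbreviate $S_i(t)=\sum_{j=1}^M u_{ij}(t)$ and $K^*=\lfloor B/(Mc_{I^*})\rfloor+1$. The idea is to first express the adaptive sum on the right as a sum of per-arm contributions weighted by $\mathbb{E}[T_i(\tau(B))]$, then isolate the cost-overhead quantity $K^*-\mathbb{E}[\tau(B)]$ and the suboptimal-exploration quantity $\sum_{i\neq I^*}\mathbb{E}[T_i(\tau(B))]$, and finally plug in Theorems~\ref{tau_bound} and~\ref{Ti_bound}.

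First I would establish the Wald-like identity $\mathbb{E}[\sum_{t=1}^{\tau(B)}\log S_{A(t)}(t)] = \sum_{i=1}^N \mathbb{E}[T_i(\tau(B))]\,\mathbb{E}[\log S_i]$, using that $\{A(t)=i,\,t\leq \tau(B)\}$ is $\mathcal{F}_{t-1}$-measurable (both the algorithm's choice and the budget-exhaustion condition depend only on past history) while $S_i(t)$ is independent of $\mathcal{F}_{t-1}$; the deterministic bound $\tau(B)\leq B/(Mc_{min})$ and the uniform bound $\log S_i(t)\in[\log M,\log 2M]$ make the interchange rigorous via optional stopping. Next, exploiting $T_{I^*}(\tau(B))=\tau(B)-\sum_{i\neq I^*}T_i(\tau(B))$ to regroup gives
\begin{equation*}
\mathbb{E}[Reg_B(A)] \leq (K^*-\mathbb{E}[\tau(B)])\,\mathbb{E}[\log S_{I^*}] + \sum_{i\neq I^*}\mathbb{E}[T_i(\tau(B))]\,\bigl(\mathbb{E}[\log S_{I^*}]-\mathbb{E}[\log S_i]\bigr).
\end{equation*}
I would then upper bound $\mathbb{E}[\log S_{I^*}]\leq \log(MU_{I^*})$ by Lemma~\ref{jensen}, and each per-arm gap by $\log(U_{I^*}/u_{min})$ by combining Jensen with the almost-sure lower bound $\mathbb{E}[\log S_i]\geq \log(Mu_{min})$. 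Finally, I substitute the bound on $K^*-\mathbb{E}[\tau(B)]$ from Theorem~\ref{tau_bound} and bound each $\mathbb{E}[T_i(\tau(B))]$ by $\tfrac{8\log(B/(Mc_{min}))}{Mc_i^2\Delta_i^2}+1+\tfrac{\pi^2}{3}$ via Theorem~\ref{Ti_bound} (using $\tau(B)\leq B/(Mc_{min})$), summing over at most $N$ suboptimal arms to reach the claimed form.

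The hard part will be the Wald-like decomposition of the adaptive sum: because both the stopping time $\tau(B)$ and the actions $A(t)$ depend on the random reward history, linearity of expectation cannot be applied naively to pull $\mathbb{E}[\log S_i]$ out of a random-length, random-indexed sum. Filtration measurability of $\{A(t)=i,\,t\leq\tau(B)\}$ together with the independence of $S_i(t)$ from the past, plus the boundedness facts above, resolve this but need careful bookkeeping. A secondary subtlety is that $K^*-\mathbb{E}[\tau(B)]$ and the log-gaps $\mathbb{E}[\log S_{I^*}]-\mathbb{E}[\log S_i]$ are not guaranteed to be non-negative; this is harmless because the Theorem~\ref{tau_bound} and Theorem~\ref{Ti_bound} upper bounds I substitute are themselves non-negative, so the substitution remains a valid upper bound on the regret.
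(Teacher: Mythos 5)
Your proposal is correct and arrives at the bound by essentially the same decomposition the paper uses: a horizon-mismatch term weighted by $\log(MU_{I^*})$ plus a suboptimal-pull term weighted by $\log(U_{I^*}/u_{min})$, controlled by Theorem \ref{tau_bound} and Theorem \ref{Ti_bound} respectively, with Jensen's inequality (Lemma \ref{jensen}) supplying $\mathbb{E}\bigl[\log\bigl(\sum_j u_{I^*j}(t)\bigr)\bigr]\leq\log(MU_{I^*})$ for the oracle term. The one place you diverge is the treatment of the adaptive sum: the paper conditions on $\tau(B)$ via its Lemma \ref{con-e}, applies a second, user-level Jensen step $\log\bigl(\tfrac{1}{M}\sum_j u_{A(t)j}\bigr)\geq\tfrac{1}{M}\sum_j\log(u_{A(t)j})$, and works with per-round probabilities $\mathbb{P}(A(t)=i)$, whereas you package the same content as a Wald-type identity $\mathbb{E}\bigl[\sum_{t\le\tau(B)}\log S_{A(t)}(t)\bigr]=\sum_i\mathbb{E}[T_i(\tau(B))]\,\mathbb{E}[\log S_i]$, justified by the $\mathcal{F}_{t-1}$-measurability of $\{A(t)=i,\ t\le\tau(B)\}$, the independence of the fresh rewards, and the deterministic bounds on $\tau(B)$ and $\log S_i$. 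Your version is, if anything, tidier: the paper's user-level Jensen step turns out to be redundant since both arguments ultimately bound each reward below by $u_{min}$, and your explicit remark that the possible negativity of $K^*-\mathbb{E}[\tau(B)]$ and of the log-gaps is harmless (because the substituted bounds and the quantities $\mathbb{E}[\log S_i]\ge\log(Mu_{min})\ge 0$ are all non-negative) makes rigorous a sign check that the paper silently elides.
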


\begin{proof}
    First, we transform $\mathbb{E}_{\tau(B),\{A(t)\}}\left[R_B\left(A\right)\right]$ as follows.
    \begin{align}
        \mathbb{E}_{\tau(B),\{A(t)\}}\left[R_B\left(A\right)\right] 
        &= \mathbb{E}_{\tau(B),\{A(t)\}}\left[\sum_{t=1}^{\tau(B)}\log\left(\sum_{j=1}^{M}u_{A(t)j}\left(t\right)\right)\right] \notag \\
        &= \mathbb{E}_{\tau(B),\{A(t)\}}\left[\sum_{t=1}^{\tau(B)}\log\left(\sum_{j=1}^{M}M\frac{u_{A(t)j}\left(t\right)}{M}\right)\right] \notag\\
        &= \mathbb{E}_{\tau(B),\{A(t)\}}\left[\sum_{t=1}^{\tau(B)}\left(\log(M) + \log\left(\sum_{j=1}^{M}\frac{u_{A(t)j}\left(t\right)}{M}\right)\right)\right] \notag\\
        &\geq \mathbb{E}_{\tau(B),\{A(t)\}}\left[\sum_{t=1}^{\tau(B)}\left(\log(M) + \frac{1}{M}\sum_{j=1}^{M}\log(u_{A(t)j}(t))\right)\right] \qquad \triangleright \; \text{Lemma \ref{jensen}.}\notag\\
        &= \mathbb{E}_{\tau(B)}\mathbb{E}_{\{A(t)\}\vert\tau(B)}\left[\sum_{t=1}^{\tau(B)}\left(\log(M) + \frac{1}{M}\sum_{j=1}^{M}\log(u_{A(t)j}(t))\right)\right] \notag\\ 
        &= \mathbb{E}_{\tau(B)}\left[\sum_{t=1}^{\tau(B)}\mathbb{E}_{A(t)}\left[\log(M) + \frac{1}{M}\sum_{j=1}^{M}\log(u_{A(t)j}(t))\right]\right] \notag\\
       &= \mathbb{E}_{\tau(B)}\left[\sum_{t=1}^{\tau(B)}\sum_{i=1}^{N}\left(\mathbb{P}\left(A\left(t\right)=i\right)\left(\log(M) + \frac{1}{M}\sum_{j=1}^{M}\log(u_{ij}(t))\right)\right)\right] \notag\\
        &= \mathbb{E}_{\tau(B)}\left[\sum_{t=1}^{\tau(B)}\sum_{i=1}^{N}\mathbb{P}\left(A\left(t\right)=i\right)\log(M) + \sum_{t=1}^{\tau(B)}\sum_{i=1}^{N}\mathbb{P}\left(A\left(t\right)=i\right)\frac{1}{M}\sum_{j=1}^{M}\log(u_{ij}(t))\right] \notag\\
        &= \mathbb{E}_{\tau(B)}\left[\sum_{t=1}^{\tau(B)}\log(M) + \sum_{t=1}^{\tau(B)}\sum_{i=1}^{N}\mathbb{P}\left(A\left(t\right)=i\right)\frac{1}{M}\sum_{j=1}^{M}\log(u_{ij}(t))\right] \notag\\
        &= \mathbb{E}_{\tau(B)}\left[\tau(B)\log(M)\right] + \mathbb{E}_{\tau(B)}\left[\sum_{t=1}^{\tau(B)}\sum_{i=1}^{N}\mathbb{P}\left(A\left(t\right)=i\right)\frac{1}{M}\sum_{j=1}^{M}\log(u_{ij}(t))\right] \notag\\ 
        &= \log(M)\mathbb{E}_{\tau(B)}\left[\tau(B)\right] + \mathbb{E}_{\tau(B)}\left[\sum_{t=1}^{\tau(B)}\sum_{i=1}^{N}\mathbb{P}\left(A\left(t\right)=i\right)\frac{1}{M}\sum_{j=1}^{M}\log(u_{ij}(t))\right]. \notag
    \end{align}
    \noindent
    Second, we transform
    $\mathbb{E}\left[R_B\left(A^*\right)\right]$ as follows.
    \begin{align}
        \mathbb{E}\left[R_B\left(A^*\right)\right]
        &\leq \mathbb{E}\left[\sum_{t=1}^{\lfloor B/(Mc_{I^*})\rfloor+1}\log\left(\sum_{j=1}^{M}u_{I^*j}\left(t\right)\right)\right] \notag\\
        &= \sum_{t=1}^{\lfloor B/(Mc_{I^*})\rfloor+1}\mathbb{E}\left[\log\left(\sum_{j=1}^{M}u_{I^*j}(t)\right)\right] \notag\\
        &\leq \sum_{t=1}^{\lfloor B/(Mc_{I^*})\rfloor+1}\log\left(\mathbb{E}\left[\sum_{j=1}^{M}u_{I^*j}(t)\right]\right)=\sum_{t=1}^{\lfloor B/(Mc_{I^*})\rfloor+1}\log\left(\sum_{j=1}^{M}\mathbb{E}[u_{I^*j}(t)]\right) \qquad \triangleright \; \text{Lemma \ref{jensen}.}\notag\\
        &= \sum_{t=1}^{\lfloor B/(Mc_{I^*})\rfloor+1}\log\left(M\cdot U_{I^*}\right). \notag
    \end{align}
\noindent
For convenience, we define
\begin{equation}
    \mathcal{Q}[\tau(B)]=\mathbb{E}_{\tau(B)}\left[\sum_{t=1}^{\tau(B)}\sum_{i=1}^{N}\mathbb{P}\left(A\left(t\right)=i\right)\frac{1}{M}\sum_{j=1}^{M}\log(u_{ij}(t))\right].\notag
\end{equation}
Finally, we have
    \begin{flalign}
        &\mathbb{E}_{\tau(B),\{A(t)\}}\left[Reg_A\left(B\right)\right]= \mathbb{E}\left[R_B\left(A^*\right)\right] - \mathbb{E}_{\tau(B),\{A(t)\}}\left[R_B\left(A\right)\right]\notag\\
        &\leq \sum_{t=1}^{\lfloor B/(Mc_{I^*})\rfloor+1}\log\left(MU_{I^*}\right) - \log(M)\mathbb{E}\left[\tau(B)\right] - \mathcal{Q}[\tau(B)] \notag\\
        &= \sum_{t=1}^{\lfloor B/(Mc_{I^*})\rfloor+1}\log\left(MU_{I^*}\right) - \log(M)\mathbb{E}\left[\tau(B)\right] - \log(U_{I^*})\mathbb{E}\left[\tau(B)\right] + \log(U_{I^*})\mathbb{E}\left[\tau(B)\right]- \mathcal{Q}[\tau(B)] \notag\\
        &= \sum_{t=1}^{\lfloor B/(Mc_{I^*})\rfloor+1}\log\left(MU_{I^*}\right) - \log(M)\mathbb{E}\left[\tau(B)\right] - \log(U_{I^*})\mathbb{E}\left[\tau(B)\right] + \mathbb{E}\left[\sum_{t=1}^{\tau(B)}\log(U_{I^*})\right]-\mathcal{Q}[\tau(B)] \notag\\
        &= \sum_{t=1}^{\lfloor B/(Mc_{I^*})\rfloor+1}\log\left(MU_{I^*}\right) - \log(MU_{I^*})\mathbb{E}\left[\tau(B)\right] + \mathbb{E}\left[\sum_{t=1}^{\tau(B)}\log(U_{I^*})\sum_{i=1}^{N}\mathbb{P}\left(A\left(t\right)=i\right)\right]-\mathcal{Q}[\tau(B)] \notag\\
        &= \sum_{t=1}^{\lfloor B/(Mc_{I^*})\rfloor+1}\log\left(MU_{I^*}\right) - \log(MU_{I^*})\mathbb{E}\left[\tau(B)\right] + \mathbb{E}_{\tau(B)}\left[\sum_{t=1}^{\tau(B)}\sum_{i=1}^{N}\mathbb{P}\left(A\left(t\right)=i\right)\frac{1}{M}\sum_{j=1}^{M}\log(U_{I^*})\right]-\mathcal{Q}[\tau(B)] \notag\\
        &= \sum_{t=1}^{\lfloor B/(Mc_{I^*})\rfloor+1}\log\left(MU_{I^*}\right) - \log(MU_{I^*})\mathbb{E}\left[\tau(B)\right] + \mathbb{E}_{\tau(B)}\left[\sum_{t=1}^{\tau(B)}\sum_{i=1}^{N}\mathbb{P}\left(A\left(t\right)=i\right)\frac{1}{M}\sum_{j=1}^{M}\log\left(\frac{U_{I^*}}{u_{ij}(t)}\right)\right] \notag\\
        &\leq \sum_{t=1}^{\lfloor B/(Mc_{I^*})\rfloor+1}\log\left(MU_{I^*}\right) - \log(MU_{I^*})\mathbb{E}\left[\tau(B)\right] + \mathbb{E}_{\tau(B)}\left[\sum_{t=1}^{\tau(B)}\sum_{i=1}^{N}\mathbb{P}\left(A\left(t\right)=i\right)\frac{1}{M}\sum_{j=1}^{M}\log\left(\frac{U_{I^*}}{u_{min}}\right)\right] \notag\\
        &= \log(MU_{I^*})\left(\lfloor\frac{B}{Mc_{I^*}}\rfloor + 1 - \mathbb{E}\left[\tau(B)\right]\right) + \log\left(\frac{U_{I^*}}{u_{min}}\right)\mathbb{E}_{\tau(B)}\left[\sum_{i=1}^{N}\sum_{t=1}^{\tau(B)}\mathbb{P}\left(A\left(t\right)=i\right)\right] \notag\\
        &\leq \log(MU_{I^*})\left(\lfloor\frac{B}{Mc_{I^*}}\rfloor + 1 - \left(\frac{B}{Mc_{I^*}} - \frac{c_{min}}{c_{I^*}} - \sum_{\delta_i>0}\frac{\delta_i}{c_{I^*}}\left(\frac{8\log\left(\frac{B}{Mc_{min}}\right)}{Mc_i^2\Delta_i^2} + 1 + \frac{\pi^{2}}{3}\right)\right)\right) \notag\\
        &\qquad+ \log\left(\frac{U_{I^*}}{u_{min}}\right)\mathbb{E}_{\tau(B)}\left[\sum_{i=1}^{N}\left(\frac{8\log(\tau(B))}{Mc_i^2\Delta_i^2} + 1 + \frac{\pi^{2}}{3}\right)\right] \qquad \triangleright \; \text{Theorem \ref{Ti_bound}, \ref{tau_bound}.}\notag\\
        &\leq \log(MU_{I^*})\left(1+\frac{c_{min}}{c_{I^*}} + \sum_{\delta_i>0}\frac{\delta_i}{c_{I^*}}\left(\frac{8\log\left(\frac{B}{Mc_{min}}\right)}{Mc_i^2\Delta_i^2} + 1 + \frac{\pi^{2}}{3}\right)\right)\notag\\
        &\qquad+ \log\left(\frac{U_{I^*}}{u_{min}}\right)\mathbb{E}_{\tau(B)}\left[\sum_{i=1}^{N}\left(\frac{8\log\left(\frac{B}{Mc_{min}}\right)}{Mc_i^2\Delta_i^2} + 1 + \frac{\pi^{2}}{3}\right)\right]\notag\\
        &= \log(MU_{I^*})\left(1+\frac{c_{min}}{c_{I^*}} + \sum_{\delta_i>0}\frac{\delta_i}{c_{I^*}}\left(\frac{8\log\left(\frac{B}{Mc_{min}}\right)}{Mc_i^2\Delta_i^2} + 1 + \frac{\pi^{2}}{3}\right)\right) + \log\left(\frac{U_{I^*}}{u_{min}}\right)N\left(\frac{8\log\left(\frac{B}{Mc_{min}}\right)}{Mc_i^2\Delta_i^2} + 1 + \frac{\pi^{2}}{3}\right). \notag
    \end{flalign}
     The last inequality is obtained from $\tau(B)\leq \frac{B}{c_{min}}$. Thus, this theorem has been proven.
\end{proof}
    
\end{document}